\documentclass[a4paper,fleqn]{cas-sc}

\usepackage[numbers,sort&compress]{natbib}
\usepackage{amsmath,amssymb,amsthm,mathtools}
\usepackage{booktabs}
\usepackage{enumitem}
\usepackage{graphicx}
\usepackage{xcolor}
\usepackage{hyperref}
\usepackage{capt-of}
\usepackage{multirow}
\usepackage{wrapfig}
\usepackage{float}
\usepackage{placeins}

\newtheorem{theorem}{Theorem}
\newtheorem{corollary}{Corollary}
\newtheorem{proposition}{Proposition}
\newtheorem{definition}{Definition}

\newcounter{algorithm}
\newenvironment{algorithmblock}[1]{%
  \refstepcounter{algorithm}%
  \par\smallskip\noindent\hrule\smallskip%
  \textbf{Algorithm~\thealgorithm. #1}\par\smallskip%
  \begingroup\small%
}{%
  \par\endgroup\smallskip\hrule\par\smallskip%
}

\newcommand{\R}{\mathcal R}
\newcommand{\EO}{\operatorname{EO}}
\newcommand{\SE}{\operatorname{SE}}

\newcommand{\eps}{\varepsilon}

\begin{document}
\let\WriteBookmarks\relax
\def\floatpagepagefraction{1}
\def\textpagefraction{.001}

\shorttitle{EP-realizability for Dynamics-preserving and Privacy-aware Network Reconstruction}
\shortauthors{R. Porcedda}

\title[mode=title]{Equitable Partition Realizability for Dynamics-preserving and Privacy-aware Network Reconstruction}

\author[1,2]{Riccardo Porcedda}[orcid=0000-0001-7360-4401]
\cormark[1]
\ead{riccardo.porcedda@santannapisa.it}
\credit{Conceptualization, Methodology, Software, Writing}

\affiliation[1]{organization={Department of Excellence L'EMbeDS, Sant'Anna School of Advanced Studies},
                city={Pisa},
                country={Italy}}
\affiliation[2]{organization={Department of Computer Science, University of Pisa},
                city={Pisa},
                country={Italy}}
\cortext[cor1]{Corresponding author.}

\begin{abstract}
Degree-sequence realizability is the combinatorial basis of configuration models, but degree constraints alone do not ensure the preservation of graph dynamics. Hence, configuration models are unable to recover centrality measures, unless these are strongly correlated with the degree sequence. To address this matter, we introduce \emph{EP-realizability}, the analogue problem induced by an equitable partition (EP): given the EP of a graph, decide whether the partition is realized by a simple undirected loopless graph and therefore construct such a graph. After defining the problem, we solve it by reducing it to sub-problems related to Havel--Hakimi and the Gale--Ryser theorem. We also face the challenge of solving the problem with an Approximate Equitable Partition ($\varepsilon$-EP), so that it is possible to reconstruct a network starting from partial and more privacy-preserving information. We evaluate privacy with edge overlap, deriving also, for our proposed $\varepsilon$-EP-realizability solution, a predictor for this metric. Experiments on Karate, Cora, CiteSeer and PubMed datasets show that our algorithm achieves a favourable and tunable privacy--utility trade-off, comparing the results with Havel--Hakimi algorithm, Newman's configuration model and a stochastic block model. Finally, both with real data and random graphs, we show that our algorithm has approximately linear time complexity with respect to the number of edges.
\end{abstract}

\begin{keywords}
Equitable partition \sep Weisfeiler--Lehman \sep Graph realization \sep Configuration model \sep Havel--Hakimi \sep Privacy \sep Edge overlap \sep Network centrality
\end{keywords}

\maketitle

\section{Introduction}
\label{sec:intro}

The classical graph realization problem asks whether a prescribed degree sequence is induced by a simple graph. The Erd\H{o}s--Gallai criterion and the constructive Havel--Hakimi algorithm answer this question exactly \cite{Havel1955,ErdosGallai1960,Hakimi1962}. Together with the bipartite Gale--Ryser theorem \cite{Gale1957,Ryser1957}, these results are the combinatorial substrate of configuration models: once a degree sequence is graphical, it is possible to sample graphs from the distribution of all realizable fibres \cite{NewmanStrogatzWatts2001,Fosdick2018}.
However, the limitation of configuration models is that degree constraints alone are not designed to preserve mesoscopic structure or dynamic observables induced by walks, such as Katz scores, PageRank scores, eigenvector centrality, or entropy-based centralities \cite{cimini2021reconstructing}.

Several extensions address this limitation by adding structural constraints: correlated configuration models preserve degree--degree mixing, while layered configuration models use centrality or onion layers to constrain the ensemble \cite{HebertDufresne2016,HebertDufresne2024}. Still, these constraints do not provide algebraic guarantees on single realizations, but rather empirical results on sampled ensembles and related statistics.

To tackle this issue, we study a setting based on equitable partitions (EPs), which are closely related to stable 1-WL colour refinement and structural roles \cite{WeisfeilerLeman1968,GodsilRoyle2001}. If $H$ is the membership matrix of an exact EP (grouping nodes into structural roles) of an adjacency matrix $A$, this satisfies
\begin{equation} \label{eq:EP}
        AH=HQ
\end{equation}
with $Q$ being the adjacency matrix of the quotient graph. Every eigenpair $(\lambda,\nu)$ of
$Q$ lifts to an eigenpair of $A$ through the block-constant vector $H\nu$:
\begin{equation*}
AH\nu=HQ\nu=\lambda H\nu.
\end{equation*}
Thus the quotient spectrum is embedded in the spectrum of $A$ on the
block-constant subspace. Moreover, by induction,
\begin{equation*}
    A^tH=HQ^t
    \qquad t\geq 0,
\label{eq:EP-powers}
\end{equation*}
so any linear diffusion started from block-constant features evolves exactly
as its quotient counterpart. This result was shown in \cite{GodsilRoyle2001,cardelli2015forward}.
Other quantities predetermined by the EP and the quotient graph are eigenvector centrality \cite{scholkemper2023optimization, sanchez2020exploiting, tognazzi2018differential}, PageRank centrality \cite{scholkemper2023optimization,tognazzi2018differential} and Katz centrality \cite{allouch2025key, tognazzi2018differential}.
Examples of EP adoption for non-linear dynamics like consensus and epidemic dynamics can also be found in \cite{o2013observability, bonaccorsi2015epidemic}.

For many applications, it may also be desirable to have approximate definitions of EP ($A H \simeq HQ$) to further compress aggregate information while preserving dynamics as much as possible \cite{bonaccorsi2015epidemic, scholkemper2023optimization, squillace2024efficient, squillace2026scalable}.

Therefore, we explore in this paper if the concept of EP (and approximate EP) can be used to define a new graph realization problem, whose solution can be the combinatorial substrate for new network reconstruction methods capable of preserving dynamic properties of graphs.

A related direction is the NeST
configuration model \cite{Stamm2023}. NeST samples from the fully available
original graph by switching edges while preserving WL colourings up to a chosen
depth. In contrast, our setting starts from a released aggregate role
description, such as $(AH,H)$ or $(H,Q)$, and asks what graphs can be built
from that information alone.

\subsection{Contributions}
The contributions of the paper are therefore the following:
\begin{enumerate}[leftmargin=18pt]
\item we define the (approximate) EP-realizability problem;
\item we provide a solution to the problem, proving that exact EP-realizability decomposes into independent classical
      graphicality problems;
\item from the solution, we naturally build a constructive heap-based realization algorithm with $O(n+m)\log n$ time complexity (where $n$ is the number of vertices in the graph and $m$ the number of edges);
\item we derive equations to predict, having as input $(AH,H)$ or $(H,Q)$ alone, mean and standard deviation of the edge overlap between original graphs and graphs reconstructed with our method;
\item we evaluate the edge overlap and the utility of reconstructed graphs in terms of static properties (power-law exponent, number of triangles, clustering coefficient, ...), centrality measures (eigenvector, PageRank, Katz, betweenness) and diffusion error;
\item we compare results from point 5 with graphs reconstructed using Havel--Hakimi, Newman's configuration model and a stochastic block model built on $H$;
\item the evaluation of points 3-6 are made on real-world datasets, but for point 3 we adopt also random graphs.
\end{enumerate}

\section{Notation and Problem Definition}
\label{sec:notation}

Let $G=(V,E)$ be a simple undirected graph with adjacency matrix
$A=A^\top\in\{0,1\}^{n\times n}$ and $A_{ii}=0$. Let
$\{B_1,\ldots,B_k\}$ be a partition of $V$, with
$s_\alpha=|B_\alpha|$, and let $H\in\{0,1\}^{n\times k}$ be its
membership matrix. For a graph $A$, its block-degree profile with respect to $H$ is $AH$.
Thus $(AH)_{i\beta}$ is the number of neighbours that node $i$ has in the role $B_\beta$.

\begin{definition}[$\varepsilon$-EP realization problem]
    Given an $\varepsilon$-equitable partition with membership matrix $H$ and either $AH$ or $Q=(H^T H)^{-1} H^T A H$, the $\varepsilon$-EP realization problem asks whether there is a simple graph $\hat A$ such that $\|\hat A H - H(H^T H)^{-1} H^T A H\|_\infty = \|\hat A H - HQ\|_\infty \leq \varepsilon$.
\end{definition}

Having defined the $\varepsilon$-EP realization problem, we proceed by defining the realization with respect to an arbitrary integer block-degree profile.

\begin{definition}[Block-degree realization set]
Given a partition with membership matrix $H$ and an integer matrix
$T\in\mathbb Z_{\geq0}^{n\times k}$, the block-degree realization set is
\begin{equation}
\R(T,H)
=
\{
\hat A:
\hat A=\hat A^\top,\ 
\hat A_{ii}=0,\ 
\hat A_{ij}\in\{0,1\},\ 
\hat AH=T
\}.
\end{equation}
\end{definition}

\noindent The notation $\R(T,H)$ is general enough to cover both $(AH,H)$ and $(H,Q)$ releases, for which we have $T=AH$ and $T=HQ$ respectively.

Now we can proceed with the conditions for the block-degree and EP-realizability.

\begin{figure}
    \centering
    \includegraphics[width=\linewidth]{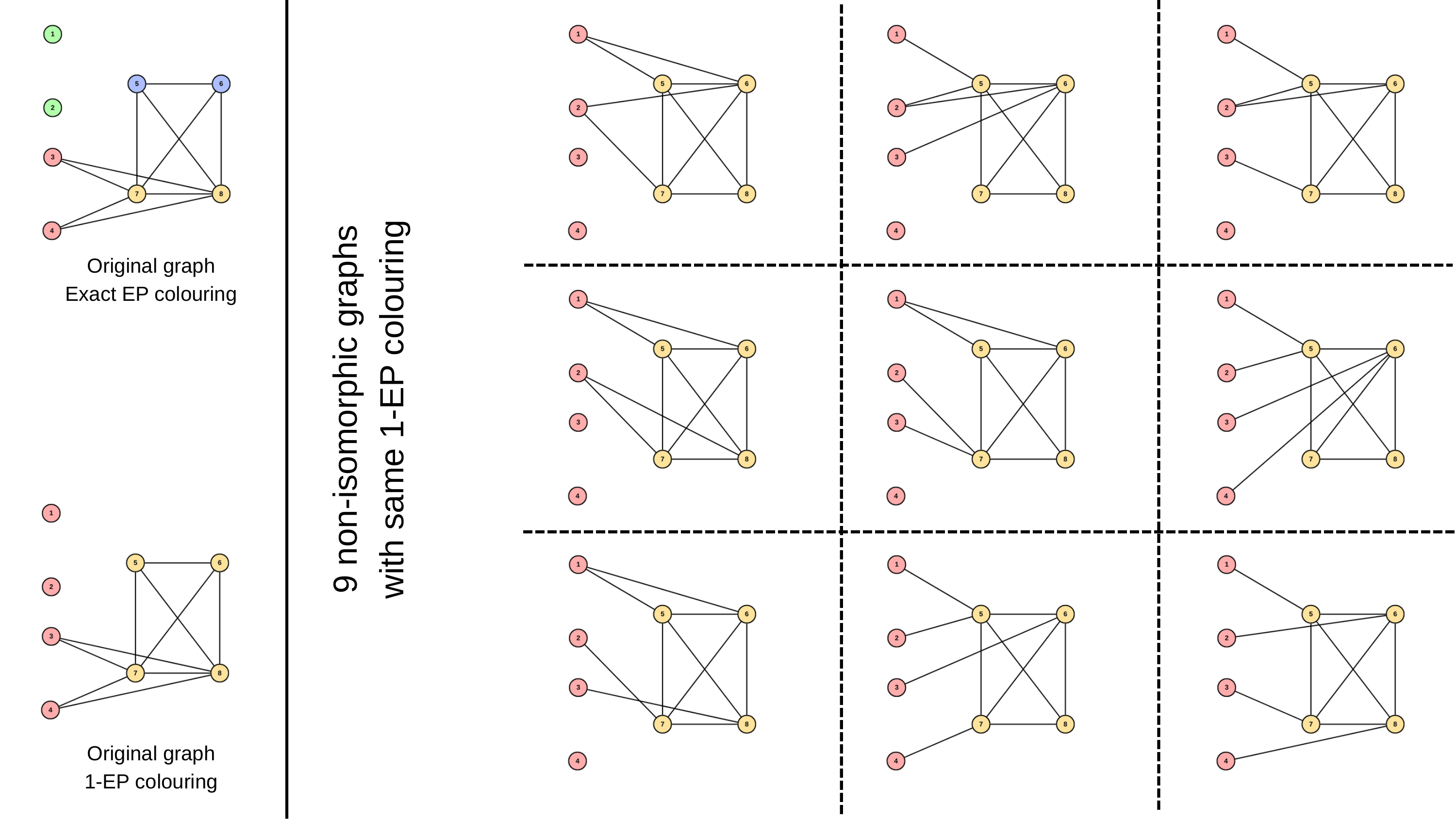} 
    \caption{\textbf{Left:} Original graph with its exact EP colouring (top) and $1$-EP colouring (bottom). With the exact EP, there is only one possible realization (the original graph itself), up to relabeling. \textbf{Right:} the other nine non-isomorphic realizations preserving the same $1$-EP representation. Together with the original graph, these form the 10 non-isomorphic realization classes corresponding to the 1428 labeled graphs compatible with the same $1$-EP.}
    \label{fig:non-isomorphic}
\end{figure}

\section{Block-degree and exact EP-realizability}
\label{sec:realizability}

We first state a general realizability criterion with block-degree $T$, then, for EP-realizability, we discuss the case $T=HQ$.

\begin{theorem}[Block-degree realizability]
\label{thm:exact}
Let $T\in\mathbb Z_{\ge0}^{n\times k}$. Then
$\mathcal R(T,H)\neq\varnothing$ if and only if:
\begin{enumerate}[label=(\roman*),leftmargin=18pt]
\item for every $\alpha$, the sequence
$(T_{i\alpha})_{i\in B_\alpha}$ is graphical;
\item for every $\alpha<\beta$, the pair
$\left(
    (T_{i\beta})_{i\in B_\alpha},
    (T_{j\alpha})_{j\in B_\beta}
    \right)$ is bipartite-graphical.
\end{enumerate}
\end{theorem}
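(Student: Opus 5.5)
The plan is to decompose the adjacency matrix $\hat A$ into blocks indexed by pairs of roles and to observe that the constraint $\hat A H = T$ splits into independent constraints, one per block. For a symmetric loopless $0/1$ matrix $\hat A$ and roles $\alpha,\beta$, write $\hat A[B_\alpha,B_\beta]$ for the submatrix with rows in $B_\alpha$ and columns in $B_\beta$. For $i\in B_\alpha$ we have
\begin{equation*}
(\hat A H)_{i\beta}=\sum_{j\in B_\beta}\hat A_{ij},
\end{equation*}
which is the $i$-th row sum of $\hat A[B_\alpha,B_\beta]$. Hence $\hat AH=T$ holds if and only if, for every ordered pair $(\alpha,\beta)$, the row sums of $\hat A[B_\alpha,B_\beta]$ equal $(T_{i\beta})_{i\in B_\alpha}$.

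\textbf{Necessity.} Take $\hat A\in\R(T,H)$.
\begin{enumerate}[label=(\alph*),leftmargin=18pt]
\item For $\alpha=\beta$, the diagonal block $\hat A[B_\alpha,B_\alpha]$ is symmetric, $0/1$ and has zero diagonal. It is therefore the adjacency matrix of a simple graph on $B_\alpha$ whose degree sequence is $(T_{i\alpha})_{i\in B_\alpha}$, which gives (i).
\item For $\alpha<\beta$, the off-diagonal block $\hat A[B_\alpha,B_\beta]$ is a $0/1$ biadjacency matrix of a bipartite graph between $B_\alpha$ and $B_\beta$. Its row sums are $(T_{i\beta})_{i\in B_\alpha}$. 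By symmetry, $\hat A[B_\beta,B_\alpha]=\hat A[B_\alpha,B_\beta]^\top$, so its column sums are the row sums of $\hat A[B_\beta,B_\alpha]$, namely $(T_{j\alpha})_{j\in B_\beta}$. This gives (ii).
\end{enumerate}

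\textbf{Sufficiency.} Assume (i) and (ii) hold.
\begin{enumerate}[label=(\alph*),leftmargin=18pt]
\item For each $\alpha$, condition (i) supplies a simple graph on $B_\alpha$ realizing $(T_{i\alpha})_{i\in B_\alpha}$; for instance Havel--Hakimi constructs one. Place its adjacency matrix as the diagonal block $\hat A[B_\alpha,B_\alpha]$.
\item For each $\alpha<\beta$, condition (ii) supplies a $0/1$ matrix $M_{\alpha\beta}$ with row sums $(T_{i\beta})_{i\in B_\alpha}$ and column sums $(T_{j\alpha})_{j\in B_\beta}$; the Gale--Ryser construction gives one. Set $\hat A[B_\alpha,B_\beta]=M_{\alpha\beta}$ and $\hat A[B_\beta,B_\alpha]=M_{\alpha\beta}^\top$.
\end{enumerate}
The assembled matrix is well defined, because the blocks partition the index set $V\times V$. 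It is symmetric, since each diagonal block is symmetric and the off-diagonal blocks are paired by transposition. It has entries in $\{0,1\}$ and a zero diagonal, since diagonal entries of $\hat A$ lie only in the diagonal blocks, which come from loopless graphs. Finally, by the block row-sum identity above, $\hat AH=T$ entry by entry: entries $(i,\beta)$ with $i\in B_\beta$ are handled by the diagonal block, and the remaining entries by the row or column sums of $M_{\alpha\beta}$.

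No step is a real obstacle. The only point needing care is the index bookkeeping in (ii). Condition (ii) is stated only for $\alpha<\beta$, so one must check that the constraints on rows of $B_\beta$ toward $B_\alpha$ are exactly the column-sum constraints of $M_{\alpha\beta}$. This holds precisely because symmetry forces $\hat A[B_\beta,B_\alpha]=\hat A[B_\alpha,B_\beta]^\top$. It is also why the conditions for different pairs are independent: each unordered pair of roles contributes its own block, and no entry of $\hat A$ is shared between blocks.
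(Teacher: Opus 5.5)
Your proposal is correct and follows the same route as the paper. Necessity comes from restricting a realization to its diagonal and off-diagonal blocks, and sufficiency from realizing each block independently (Havel--Hakimi for diagonal blocks, Gale--Ryser or bipartite Havel--Hakimi for off-diagonal ones) and assembling them with $\hat A[B_\beta,B_\alpha]=M_{\alpha\beta}^\top$; you just spell out the row-sum identity and the symmetry bookkeeping for the $(\beta,\alpha)$ constraints that the paper leaves implicit.
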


\begin{proof}
Necessity follows by restricting any feasible $\hat A$ to its diagonal
and off-diagonal blocks. The diagonal block
$\hat A_{\alpha\alpha}$ is a simple graph on $B_\alpha$ with degree
sequence $(T_{i\alpha})_{i\in B_\alpha}$. The off-diagonal block
$\hat A_{\alpha\beta}$ is a simple bipartite graph between
$B_\alpha$ and $B_\beta$, with left and right degree sequences $(T_{i\beta})_{i\in B_\alpha}$ and
$(T_{j\alpha})_{j\in B_\beta}$, respectively.
Conversely, realize each diagonal sequence by Havel--Hakimi and each
off-diagonal pair by bipartite Havel--Hakimi, or equivalently by the
Gale--Ryser theorem. Placing these realized blocks independently and
setting $\hat A_{\beta\alpha}=\hat A_{\alpha\beta}^{\top}$ gives a symmetric loopless binary matrix satisfying $\hat A H=T$.
Therefore $\mathcal R(T,H)\neq\varnothing$.
\end{proof}


\begin{corollary}[Realizability of the block-degree release]
\label{cor:AH-release}
Let $A\in\{0,1\}^{n\times n}$ be the adjacency matrix of a simple
undirected loopless graph, and let $H$ be the membership matrix any partition. Then $\mathcal R(AH,H)\neq\varnothing$ and $T=AH$ always satisfies the
conditions of Theorem~\ref{thm:exact}.
\end{corollary}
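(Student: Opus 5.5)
The corollary is essentially a witness statement: $A$ itself is a simple, symmetric, loopless $0/1$ matrix, and trivially $AH=T$ when $T=AH$. So the first step is to observe that $A\in\mathcal R(AH,H)$ directly from the definition of the block-degree realization set, which gives $\mathcal R(AH,H)\neq\varnothing$ with no further work.

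For the second claim, that $T=AH$ satisfies conditions (i) and (ii) of Theorem~\ref{thm:exact}, I will simply invoke the ``only if'' direction of Theorem~\ref{thm:exact} applied to the nonempty set $\mathcal R(AH,H)$. To make the argument self-contained, I will also spell out the block restriction that this direction rests on.

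For each $\alpha$, the induced subgraph $A_{\alpha\alpha}$ on $B_\alpha$ is a simple loopless graph. The degree of $i\in B_\alpha$ in it is $\sum_{j\in B_\alpha}A_{ij}=(AH)_{i\alpha}$, so the sequence $((AH)_{i\alpha})_{i\in B_\alpha}$ is graphical.

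For $\alpha<\beta$, the block $A_{\alpha\beta}$ is the biadjacency matrix of a simple bipartite graph between $B_\alpha$ and $B_\beta$. Its left degrees are $(AH)_{i\beta}$ for $i\in B_\alpha$. By symmetry of $A$, its right degrees are $\sum_{i\in B_\alpha}A_{ji}=(AH)_{j\alpha}$ for $j\in B_\beta$. This is exactly the pair required in (ii), so that pair is bipartite-graphical.

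There is no real obstacle here. The only point needing care is the use of $A=A^\top$ to identify the column sums of $A_{\alpha\beta}$ with the entries $(AH)_{j\alpha}$ of the released matrix, and the use of $A_{ii}=0$ so that the diagonal blocks are loopless graphs rather than multigraphs with loops.
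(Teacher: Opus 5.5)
Your proposal is correct and follows the paper's proof: $A$ itself witnesses $\mathcal R(AH,H)\neq\varnothing$, and conditions (i)--(ii) then follow from the necessity direction of Theorem~\ref{thm:exact}. Your explicit block-restriction check, using $A=A^\top$ for the column sums and $A_{ii}=0$ for looplessness, simply unpacks what the paper leaves implicit.
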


\begin{proof}
Take $T=AH$. Then a graph with adjacency $A$ trivially satisfies the condition, hence $\mathcal R(AH,H)\neq\varnothing$.
The conditions of Theorem~\ref{thm:exact} follow immediately, since they
are necessary for any realizable target $T$.
\end{proof}

\begin{corollary}[EP-realizability with $T=HQ$]
\label{cor:HQ-release}
Let $A\in\{0,1\}^{n\times n}$ be the adjacency matrix of a simple
undirected loopless graph. Suppose that $H$ is the membership matrix of an equitable partition
of $A$, with quotient matrix $Q$.
Then $\mathcal R(HQ,H)\neq\varnothing$ and $T=HQ$ satisfies the conditions of Theorem~\ref{thm:exact}.
\end{corollary}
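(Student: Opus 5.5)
The plan is to reduce Corollary~\ref{cor:HQ-release} to Corollary~\ref{cor:AH-release} by proving the matrix identity $HQ = AH$ whenever $H$ is the membership matrix of an exact equitable partition of $A$. Once this holds, $\mathcal R(HQ,H)=\mathcal R(AH,H)$. That set is nonempty by Corollary~\ref{cor:AH-release}, since it contains $A$ itself, and $T=HQ$ then satisfies conditions (i) and (ii) of Theorem~\ref{thm:exact}, because those conditions are necessary for any realizable target.

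The key step is to identify the quotient $Q$ with the matrix in the defining relation \eqref{eq:EP}. The problem definition fixes $Q=(H^\top H)^{-1}H^\top A H$. Here $H^\top H=\operatorname{diag}(s_1,\ldots,s_k)$ is invertible because the blocks are nonempty. Hence $Q_{\alpha\beta}=\frac{1}{s_\alpha}\sum_{i\in B_\alpha}(AH)_{i\beta}$, the average number of neighbours in $B_\beta$ among nodes of $B_\alpha$.

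Equitability means there is some $Q'$ with $AH=HQ'$, i.e.\ $(AH)_{i\beta}=Q'_{\alpha\beta}$ for every $i\in B_\alpha$. Multiplying by $(H^\top H)^{-1}H^\top$ gives $Q=(H^\top H)^{-1}H^\top H Q' = Q'$. Therefore $HQ=HQ'=AH$ entrywise: each row average equals each constant row value.

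I expect no real obstacle, only bookkeeping. One point is to ensure the entries of $Q$ are nonnegative integers, so that $T=HQ\in\mathbb Z_{\ge 0}^{n\times k}$ as Theorem~\ref{thm:exact} requires. This follows from $HQ=AH$ with $A,H$ binary. The other point is to note that nonempty blocks are needed for the inverse to exist, which holds for any partition.
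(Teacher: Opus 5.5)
Your proposal is correct and matches the paper's proof, which also uses $AH=HQ$ to reduce the statement directly to Corollary~\ref{cor:AH-release}. Your extra checks make explicit what the paper takes for granted: that the formula $(H^\top H)^{-1}H^\top AH$ recovers the equitable-partition quotient, and that $HQ$ is a nonnegative integer matrix.
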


\begin{proof}
Since $H$ is the membership matrix of an equitable partition of $A$ with quotient $Q$, we have $AH=HQ$. Therefore, Corollary \ref{cor:AH-release} trivially follows.
\end{proof}

\section{$\varepsilon$-EP-realizability}
So far, we have discussed the exact EP-realizability, i.e. the solution of the $\varepsilon$-EP realization problem in the case $\varepsilon=0$.
Now, we want to explore the realizability conditions for $\varepsilon>0$.

The reason why we are interested in the realizations of approximate EPs is that we aim at increasing the cardinality of the realization set $\mathcal R(T,H)$: while preserving the dynamic properties to some degree, the more graphs can be realized with the same $\varepsilon$-EP, the more difficult is gonna be for an attacker to infer the original graph that produced that $\varepsilon$-EP.
To better visualize this, in Figure \ref{fig:non-isomorphic} we provide an example of a graph which has only one possible realization with exact EP (up to relabeling of the nodes), while the 1-EP increases the set to a total of 10 non-isomorphic graphs.
In Section B of the Supplementary Material we give further details about this example.

\subsection{The problem of the approximate quotient}
The solution to the $\varepsilon$-EP realization problem is actually trivial for the release $T=AH$: even with $\varepsilon>0$, the definition of the membership matrix $H$ remains unchanged, therefore $T$ is still an integer matrix, hence Theorem \ref{thm:exact} and Corollary \ref{cor:AH-release} are satisfied.
The quotient matrix $Q=(H^\top H)^{-1}H^\top AH$, however, is not integer in general when $\|A H - HQ\|_\infty \leq \varepsilon$ holds.
So, for $\varepsilon>0$, it is necessary to find an integer feasible matrix $T=\hat AH$ satisfying $\|T-HQ\|_\infty\leq\varepsilon$.
Equivalently, for every $i\in B_\alpha$ and
every $\beta$, the admissible values of $T_{i\beta}$ are
\begin{equation} \label{eq:interval}
I_{i\beta}^{\varepsilon}(Q)
=
\left[
\max\{0,\lceil Q_{\alpha\beta}-\varepsilon\rceil\},
\min\{N_{i\beta},\lfloor Q_{\alpha\beta}+\varepsilon\rfloor\}
\right]\cap \mathbb{Z},
\end{equation}
with 
$
N_{i\beta}
=
\begin{cases}
s_\beta-1, & \beta=\alpha,\\
s_\beta, & \beta\neq\alpha
\end{cases}
$
being the maximum possible number of neighbours of $i$ in
$B_\beta$ for $i\in B_\alpha$.

\begin{theorem}[Approximate quotient realizability]
\label{thm:approx}
Let $Q\in\mathbb R^{k\times k}$ and let
$\varepsilon\geq0$. There exists a simple undirected loopless graph
$\hat A$ such that
$
    \|\hat AH-HQ\|_\infty\leq\varepsilon
$
if and only if there exists an integer matrix
$T\in\mathbb Z_{\ge0}^{n\times k}$ such that
$
    T_{i\beta}\in I_{i\beta}^{\varepsilon}(Q)
    \quad \forall i,\beta,
$
with $I_{i\beta}^{\varepsilon}(Q)$ as defined in Eq. \ref{eq:interval}. This $T$ satisfies the block-wise graphicality conditions of
Theorem~\ref{thm:exact}.
\end{theorem}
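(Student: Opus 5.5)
The plan is to reduce the statement to Theorem~\ref{thm:exact} by passing through the integer block-degree matrix $T=\hat AH$. I will read the final sentence of the statement as part of the characterization. That is, the claim to prove is that a simple loopless $\hat A$ with $\|\hat AH-HQ\|_\infty\le\varepsilon$ exists if and only if some integer $T$ with $T_{i\beta}\in I^{\varepsilon}_{i\beta}(Q)$ for all $i,\beta$ also satisfies conditions (i)--(ii) of Theorem~\ref{thm:exact}. The interval constraints alone cannot be sufficient. For instance, a row choice that makes $\sum_{i\in B_\alpha}T_{i\alpha}$ odd, or that gives $\sum_{i\in B_\alpha}T_{i\beta}\neq\sum_{j\in B_\beta}T_{j\alpha}$, lies in the intervals but is not realizable. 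I will point this out explicitly so the statement is not read as claiming more than it can.

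\emph{Necessity.} Suppose $\hat A$ is simple, symmetric and loopless with $\|\hat AH-HQ\|_\infty\le\varepsilon$, and set $T:=\hat AH$. The steps are as follows.
\begin{enumerate}[label=(\alph*),leftmargin=18pt]
\item $T$ has nonnegative integer entries, because $\hat A$ and $H$ are $0/1$ matrices.
\item For $i\in B_\alpha$ we have $(HQ)_{i\beta}=Q_{\alpha\beta}$, since row $i$ of $H$ is the indicator vector $e_\alpha$.
\item The norm bound therefore gives $Q_{\alpha\beta}-\varepsilon\le T_{i\beta}\le Q_{\alpha\beta}+\varepsilon$. Since $T_{i\beta}$ is an integer, this sharpens to $\lceil Q_{\alpha\beta}-\varepsilon\rceil\le T_{i\beta}\le\lfloor Q_{\alpha\beta}+\varepsilon\rfloor$.
\item $T_{i\beta}\ge 0$ trivially.
\item $T_{i\beta}\le N_{i\beta}$, because $T_{i\beta}$ counts distinct neighbours of $i$ inside $B_\beta$, and $i$ itself is excluded when $\beta=\alpha$ since $\hat A_{ii}=0$.
\end{enumerate}
Hence $T_{i\beta}\in I^{\varepsilon}_{i\beta}(Q)$ for all $i,\beta$. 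Since $\hat A\in\mathcal R(T,H)$, the necessity direction of Theorem~\ref{thm:exact} gives the block-wise graphicality conditions.

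\emph{Sufficiency.} Let $T$ be integer with entries in the intervals and satisfying (i)--(ii). The constructive direction of Theorem~\ref{thm:exact} (Havel--Hakimi on the diagonal blocks, bipartite Havel--Hakimi / Gale--Ryser off the diagonal) produces some $\hat A\in\mathcal R(T,H)$, so $\hat AH=T$. For $i\in B_\alpha$ we then have $|(\hat AH-HQ)_{i\beta}|=|T_{i\beta}-Q_{\alpha\beta}|\le\varepsilon$, because $T_{i\beta}\in[\lceil Q_{\alpha\beta}-\varepsilon\rceil,\lfloor Q_{\alpha\beta}+\varepsilon\rfloor]$. Taking the maximum over all entries gives $\|\hat AH-HQ\|_\infty\le\varepsilon$, which is the required bound.

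The only genuinely delicate point is not computational. It is fixing the correct logical form of the statement, namely that graphicality of $T$ belongs to the existential condition rather than being a consequence of the interval membership. Everything else is a short reduction to Theorem~\ref{thm:exact} together with the integer rounding of the interval bounds. If the intended statement were the weaker one (intervals alone suffice), I would instead give a counterexample such as $k=1$ with $Q=(1/2)$, $\varepsilon=1/2$, $n=3$. Here the constant choice $T\equiv 1$ lies in every interval but is not graphical because of odd parity, although the non-constant choice $T=(1,1,0)^\top$ lies in the intervals and is graphical. This shows that the choice of $T$ matters and motivates the search step that follows in the paper.
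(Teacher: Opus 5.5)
Your proof is correct and follows the paper's argument. Necessity comes from setting $T=\hat AH$ and applying the necessity half of Theorem~\ref{thm:exact}; sufficiency comes from its constructive half together with the rounding of the interval endpoints. Your reading that block-wise graphicality belongs to the existential hypothesis is exactly what the paper's converse tacitly assumes when it says Theorem~\ref{thm:exact} ``gives a graph''.

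One small correction to your aside. The example $Q=(1/2)$, $\varepsilon=1/2$, $n=3$ only shows that some interval-feasible $T$ are non-graphical; a realization exists there, so both sides of the intervals-only equivalence hold. The example $Q=(1)$, $\varepsilon=0$, $n=3$ genuinely refutes that reading: the unique admissible profile $(1,1,1)^\top$ has odd sum, so no graph exists.
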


\begin{proof}
If such a graph $\hat A$ exists, set $T=\hat AH$. Then $T$ has
nonnegative integer entries, lies in the intervals
$I_{i\beta}^{\varepsilon}(Q)$, and is block-wise graphical by
Theorem~\ref{thm:exact}. Conversely, if such a $T$ exists,
Theorem~\ref{thm:exact} gives a graph $\hat A$ with $\hat AH=T$. Since
$T_{i\beta}\in I_{i\beta}^{\varepsilon}(Q)$, we obtain
$
    \|\hat AH-HQ\|_\infty
    =
    \|T-HQ\|_\infty
    \leq \varepsilon.
$
\end{proof}

\subsection{$\varepsilon$-EP realization algorithm}
\label{sec:algorithm}

Theorem~\ref{thm:exact} provides a constructive algorithm (Algorithm \ref{alg:ep-realization}) both for exact and approximate EP for a node-level release $(AH,H)$, while only for exact EP in the case of a quotient-level release $(H,Q)$. If the approximate EP collapses to a single block in the case of the release $(AH,H)$, then the method coincides with Havel--Hakimi when $b=\theta=0$, or with Newman's configuration model otherwise. For a quotient-level release $(H,Q)$ with $\varepsilon>0$, the algorithm first chooses an integer feasible profile $T=\Phi_\eps(H,Q,\varepsilon)$ according to Theorem~\ref{thm:approx} and then Theorem~\ref{thm:exact} applies.

\begin{algorithmblock}{$\varepsilon$-EP realization}
\label{alg:ep-realization}
\textbf{Input:} partition \(\{B_1,\ldots,B_k\}\) with membership matrix $H$; release
type $(AH,H)$ or $(H,Q)$; tolerance \(\varepsilon\); sampling
parameters \(S,b,\theta\).

\textbf{Output:} one or more simple loopless realizations \(\hat A\), or None if infeasible.

\begin{enumerate}[leftmargin=18pt]
\item If \(\rho=(AH,H)\), set $T=AH$.

\item If \(\rho=(H,Q)\), $T= \Phi_\varepsilon(H,Q)$.
If the selector returns None, return None.

\item Test the block-wise graphicality conditions of
Theorem~\ref{thm:exact} for the selected profile \(T\). If they fail,
return infeasible.

\item For every diagonal layer \((\alpha,\alpha)\), realize a simple
graph on \(B_\alpha\) with degree sequence $(T_{i\alpha})_{i\in B_\alpha}$ using heap-based Havel--Hakimi.

\item For every off-diagonal layer \((\alpha,\beta)\), with
\(\alpha<\beta\), realize a simple bipartite graph between
\(B_\alpha\) and \(B_\beta\) with left and right degree sequences $(T_{i\beta})_{i\in B_\alpha},
    \qquad
    (T_{j\alpha})_{j\in B_\beta},$ using heap-based bipartite Havel--Hakimi.

\item Unite all diagonal and off-diagonal layer edge sets, and set $\hat A_{\beta\alpha}=\hat A_{\alpha\beta}^{\top}$ for every \(\alpha<\beta\).

\item To generate \(S\) randomized realizations, run layer-preserving
edge switches. First burn in with \(b\) switches per edge, then output
samples separated by \(\theta\) switches per edge. Every accepted switch
preserves the selected profile \(T\).
\end{enumerate}
\end{algorithmblock}

\begin{proposition}[Realization cost]
\label{prop:complexity}
Let $\mathcal L_T$ be the set of diagonal and off-diagonal layers with
nonzero prescribed degrees. For a layer $\ell$, let $n_\ell$ be the number
of vertices incident to the layer and $m_\ell$ the number of edges prescribed
inside it. A fixed-profile realization can be constructed in
\[
    O\!\left(
    \sum_{\ell\in\mathcal L_T}(n_\ell+m_\ell)\log n_\ell
    \right)
\]
time and $O(n+m)$ memory, where $m=\sum_{\ell\in\mathcal L_T}m_\ell$.
Consequently the dense worst-case bound is $O((nk+m)\log n)$, while for
sparse block-interaction patterns with bounded average number of active target
blocks per vertex it is $O((n+m)\log n)$. If $S$ randomized samples are
generated by layer-preserving switches after construction, with burn-in $b$
and thinning $\theta$ accepted switches per edge, the switch chain adds
$O((b+S\theta)m)$ accepted switch operations.
\end{proposition}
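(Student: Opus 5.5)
The plan is to bound the cost of Algorithm~\ref{alg:ep-realization} layer by layer, following the decomposition of Theorem~\ref{thm:exact}, and then add up the per-layer costs. I will take $T$ as given, since the proposition is about a fixed-profile realization.

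First comes a preprocessing pass over the nonzero entries of $T$. It buckets each vertex $i\in B_\alpha$ into the layers $(\alpha,\beta)$ with $T_{i\beta}>0$. Storing $T$ sparsely, this costs $O(\sum_\ell n_\ell)$ time and memory. Each vertex appears in a layer only if it has a positive prescribed degree there, so $n_\ell\le 2m_\ell$ and therefore $\sum_\ell n_\ell=O(m)$. This already gives the $O(n+m)$ memory bound once we also count the $O(m)$ output edge lists. The graphicality tests of step 3 are Erd\H{o}s--Gallai per diagonal layer and Gale--Ryser per off-diagonal layer. After sorting the degrees with $O(n_\ell\log n_\ell)$ work, each test runs in $O(n_\ell)$ using prefix sums and a pointer sweep, so these tests fit within the claimed bound.

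The core step is the heap-based Havel--Hakimi analysis on a single layer with $n_\ell$ vertices and $m_\ell$ edges.
\begin{itemize}
\item \emph{Diagonal layer.} Keep a max-heap keyed by residual degree. Repeatedly pop a vertex $v$ of residual degree $d$, pop the $d$ largest remaining vertices, connect $v$ to them, decrement their degrees, and push back those still positive. Every pop and push is charged either to an emitted edge or to the single removal of $v$. Hence there are $O(n_\ell+m_\ell)$ heap operations, each costing $O(\log n_\ell)$.
\item \emph{Off-diagonal layer.} Bipartite Havel--Hakimi works the same way. We process a left vertex and match it to the $d$ right vertices of largest residual degree, taken from a heap on the right side. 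Correctness is exactly the Gale--Ryser greedy argument, which Theorem~\ref{thm:exact} supplies.
\end{itemize}
Summing over $\mathcal L_T$ gives the main bound. Step 6 is a union of disjoint edge sets and costs $O(m)$.

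The two consequences then follow by counting layers per vertex.
\begin{itemize}
\item \emph{Dense case.} Each vertex lies in at most $k$ layers, so $\sum_\ell n_\ell\le nk$. With $\log n_\ell\le\log n$ this gives $O((nk+m)\log n)$.
\item \emph{Sparse case.} If the average number of active target blocks per vertex is bounded by a constant $c$, then $\sum_\ell n_\ell\le cn$, which yields $O((n+m)\log n)$.
\end{itemize}

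For the switch chain, the count is direct. Burn-in performs $b\,m$ accepted switches and each of the $S$ samples needs $\theta m$ more, for $O((b+S\theta)m)$ operations. The statement counts operations rather than time, so no mixing or rejection analysis is required. Switches preserve $T$ because each one exchanges endpoints within a single layer.

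The step I expect to be the main obstacle is the amortized charging argument on the heap. In a naive implementation, popping $d$ vertices and pushing them back costs $O(d\log n_\ell)$, which is fine, but a sort-based re-sort would cost $O(n_\ell\log n_\ell)$ per step and break the bound. The argument therefore has to charge every heap operation either to an edge or to a vertex removal. I would write it explicitly as a potential argument, $\Phi=$ (remaining residual degree sum) $+$ (heap size), whose value drops by at least one per constant number of heap operations.
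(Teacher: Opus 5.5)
Your proposal is correct and follows essentially the same route as the paper's proof: heap-based Havel--Hakimi is run independently on each diagonal and bipartite layer, each emitted edge is charged a constant number of $O(\log n_\ell)$ heap operations, heap construction costs $O(n_\ell)$, and the costs are summed over $\mathcal L_T$; the dense and sparse bounds then come from counting how often a vertex appears across active layers, and the switch term is counted directly. Your potential-function charging, your costing of the graphicality tests, and your sparse storage of $T$ (together with keeping only positive-degree vertices in the heaps, so that $n_\ell\le 2m_\ell$ and the $O(n+m)$ memory claim is justified) make explicit what the paper states in a few lines.
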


\begin{proof}
In a heap-based Havel--Hakimi realization, every produced edge causes a
constant number of heap removals or insertions, each costing
$O(\log n_\ell)$ in its layer. Initial heap construction costs
$O(n_\ell)$. Summing over independent diagonal and off-diagonal layers gives
the stated bound. The memory bound follows because the algorithm stores the
partition, the active heaps and the realized edge set. The dense and sparse
corollaries are obtained by bounding the repeated appearance of vertices
across active block pairs. The switch-chain term is linear in the number of
accepted local edge switches requested per edge.
\end{proof}

\begin{figure}[pos=h]
    \centering
    \includegraphics[width=0.75\linewidth]{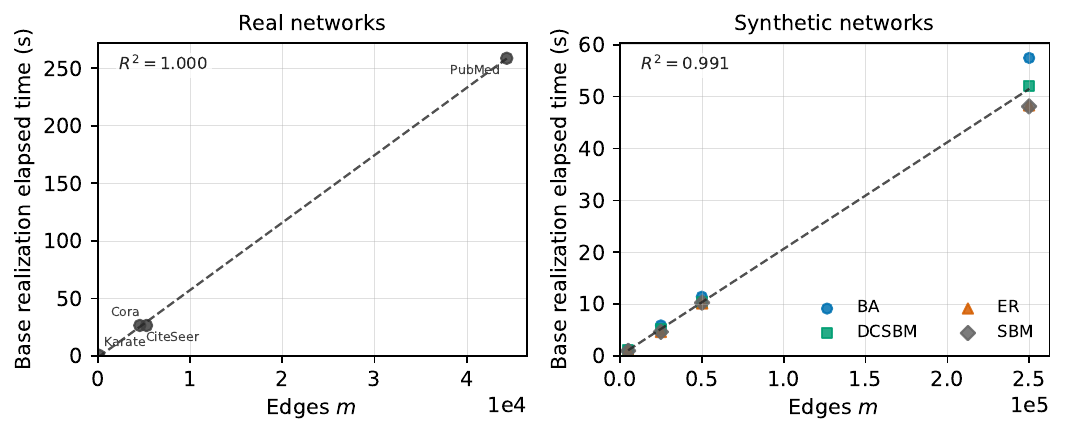}
    \caption{$(AH,H)$ realization time vs. number of edges $m$ for real and synthetic graphs. We report the $R^2$ of the linear fit.}
    \label{fig:scalability}
\end{figure}

Averaging over layers, we get a time complexity $O(n+m)\log n$, which, in case of sparse graphs with $m\propto n$, becomes $\tilde{O}(m)$. In other terms, suppressing polylogarithmic factors, we can state that our proposed algorithm has near-linear scalability in the number of edges $m$. We show this empirically in Figure \ref{fig:scalability}.

\subsection{Integer feasibility profile selector}

Here we provide the details of our choice for the integer feasibility profile selector $\Phi_\eps(H,Q,\varepsilon)$. We highlight that other selectors can
be substituted without changing Theorem~\ref{thm:approx}.

Let $\zeta_{J}(x)$ denote the integer in a finite interval
$J\subset\mathbb Z$ closest to $x$, with ties broken toward the smaller integer.
Let $\delta(n,M,L,U)$ be the nearly constant integer sequence of length $n$, sum $M$, and entries in $[L,U]$: if
$nL\leq M\leq nU$, set $a=\lfloor M/n\rfloor$ and
$r=M-na$, then use $r$ entries equal to $a+1$ and $n-r$ entries equal to $a$, after checking that the resulting sequence lies in $[L,U]$.

\begin{algorithmblock}{Canonical integer-profile selector \(\Phi_\varepsilon(H,Q)\)} \label{alg:profile-selector} \textbf{Input:} partition \(\{B_1,\ldots,B_k\}\) with membership matrix $H$; quotient matrix \(Q\); tolerance \(\varepsilon\).
\textbf{Output:} an integer node-to-block profile \(T\in\mathbb Z_{\ge0}^{n\times k}\), or None if infeasible. 

\begin{enumerate}[leftmargin=18pt]
\item Let \(s_\alpha=|B_\alpha|\) for every block \(B_\alpha\). Initialize \(T\in\mathbb Z_{\ge0}^{n\times k}\). 

\item For every ordered block pair \((\alpha,\beta)\), compute the integer interval $ L_{\alpha\beta} = \max\{0,\lceil Q_{\alpha\beta}-\varepsilon\rceil\}, \qquad U_{\alpha\beta} = \min\{N_{\alpha\beta}, \lfloor Q_{\alpha\beta}+\varepsilon\rfloor\}$, where $N_{\alpha\alpha}=s_\alpha-1, \qquad N_{\alpha\beta}=s_\beta \quad\text{for }\alpha\neq\beta.$
If \(L_{\alpha\beta}>U_{\alpha\beta}\) for some \((\alpha,\beta)\), return None.

\item For every diagonal layer \((\alpha,\alpha)\), compute the admissible internal edge-count interval \[ J_{\alpha\alpha} = \left[ \left\lceil \frac{s_\alpha L_{\alpha\alpha}}{2}\right\rceil, \left\lfloor \frac{s_\alpha U_{\alpha\alpha}}{2}\right\rfloor \right] \cap \left[0,\binom{s_\alpha}{2}\right] \cap\mathbb Z. \]

\item If \(J_{\alpha\alpha}=\varnothing\), return None. Otherwise set $m_{\alpha\alpha} = \zeta_{J_{\alpha\alpha}} \!\left( \frac{s_\alpha Q_{\alpha\alpha}}{2} \right), \quad (T_{i\alpha})_{i\in B_\alpha} = \delta \left( s_\alpha, 2m_{\alpha\alpha}, L_{\alpha\alpha}, U_{\alpha\alpha} \right).$

\item For every off-diagonal layer \((\alpha,\beta)\), with \(\alpha<\beta\), compute the admissible edge-count interval \[ J_{\alpha\beta} = \left[ \max\{s_\alpha L_{\alpha\beta}, s_\beta L_{\beta\alpha}\}, \min\{s_\alpha U_{\alpha\beta}, s_\beta U_{\beta\alpha}, s_\alpha s_\beta\} \right]\cap\mathbb Z. \]

\item If \(J_{\alpha\beta}=\varnothing\), return None. Otherwise set\\
$m_{\alpha\beta} = \zeta_{J_{\alpha\beta}} \!\left( \frac{ s_\alpha Q_{\alpha\beta} + s_\beta Q_{\beta\alpha} }{2} \right), \quad (T_{i\beta})_{i\in B_\alpha} = \delta \left( s_\alpha, m_{\alpha\beta}, L_{\alpha\beta}, U_{\alpha\beta} \right), \quad (T_{j\alpha})_{j\in B_\beta} = \delta \left( s_\beta, m_{\alpha\beta}, L_{\beta\alpha}, U_{\beta\alpha} \right).$

\item Test the block-wise graphicality conditions of Theorem~\ref{thm:exact}. If they fail, return None.

\item Return \(T\).

\end{enumerate}
\end{algorithmblock}

\section{Edge overlap as a reconstruction privacy diagnostic}
\label{sec:edge-overlap}

Let
$
    \EO(A,A')=|E(A)\cap E(A')|/|E(A)|,
$
where the vertex correspondence between $A$ and $A'$ is induced by a
specified matching. Here we choose to match vertices by ranking in the degree sequence. Low edge overlap means that few original
edges are recovered under the stated attack; this is desirable in our setting,
provided the reconstructed graph still reproduces the graph properties of
interest. This diagnostic measures edge-reconstruction risk and is not a
differential-privacy guarantee.
The overlap prediction is expressed in terms of the integer profile $T$ that
is actually realized.
Let
$
    \mathcal L
    =
    \{(\alpha,\alpha):1\leq \alpha\leq k\}
    \cup
    \{(\alpha,\beta):1\leq \alpha<\beta\leq k\}
$
be the set of diagonal and off-diagonal layers.
\begin{wrapfigure}[20]{r}{0.5\textwidth}
    \vspace{-0.25cm}
    \centering
    \includegraphics[width=\linewidth]{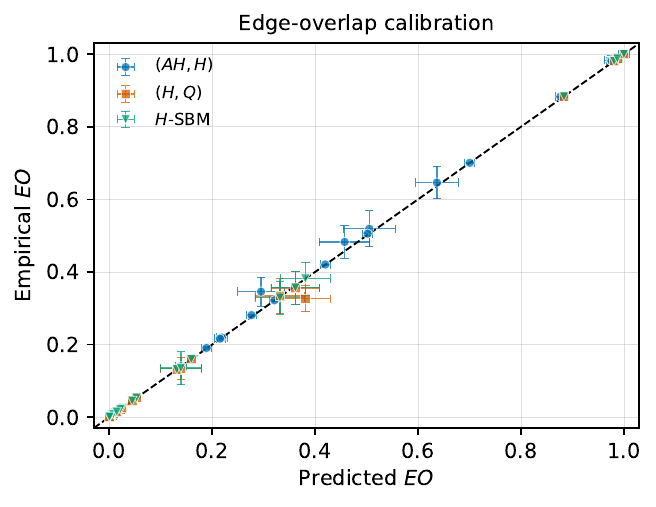}
    \caption{Empirical vs. predicted edge overlap for reconstruction methods $H$-SBM and $\varepsilon$-EP with releases $(AH,H)$ and $(H,Q)$.}
    \label{fig:eo_calibration}
\end{wrapfigure}
For each layer
$\ell\in\mathcal L$, let $N_\ell$ be the number of possible edges in that
layer:
\[
    N_{\alpha\alpha}=\binom{s_\alpha}{2},
    \qquad
    N_{\alpha\beta}=s_\alpha s_\beta
    \quad(\alpha<\beta).
\]
Let $m_\ell^A$ be the number of original edges of $A$ in layer $\ell$,
and let $m_\ell^T$ be the number of edges prescribed by the selected profile
$T$. Explicitly,
\[
    m_{\alpha\alpha}^A
    =
    \frac{1}{2}\sum_{i\in B_\alpha}(AH)_{i\alpha},
    \qquad
    m_{\alpha\alpha}^T
    =
    \frac{1}{2}\sum_{i\in B_\alpha}T_{i\alpha},
\]
and, for $\alpha<\beta$,
\[
    m_{\alpha\beta}^A
    =
    \sum_{i\in B_\alpha}(AH)_{i\beta},
    \qquad
    m_{\alpha\beta}^T
    =
    \sum_{i\in B_\alpha}T_{i\beta}
    =
    \sum_{j\in B_\beta}T_{j\alpha}.
\]
The last equality follows from feasibility of $T$.

Define the layer density
induced by $T$ as
$
    p_\ell^T=\frac{m_\ell^T}{N_\ell},
$
with $p_\ell^T=0$ when $N_\ell=0$. The expected profile-conditioned edge-overlap
is
\begin{equation}
    \widehat{\EO}(A,T;H)
    =
    \frac{1}{|E(A)|}
    \sum_{\ell\in\mathcal L}
    m_\ell^A p_\ell^T
    =
    \frac{1}{|E(A)|}
    \sum_{\ell\in\mathcal L}
    m_\ell^A \frac{m_\ell^T}{N_\ell}.
\label{eq:EO_profile}
\end{equation}
Equation~\eqref{eq:EO_profile} is the expectation obtained when, conditional
on the selected layer edge counts, reconstructed edges are treated as uniformly
placed inside each layer.
We also report a fixed-layer hypergeometric
uncertainty proxy. For a general selected profile $T$, we use
\begin{equation}
\widehat{\SE}_{\mathrm{hyp}}(\EO)
=
\frac{1}{|E(A)|}
\sqrt{
\sum_{\ell\in\mathcal L}
\frac{
m_\ell^A m_\ell^T
(N_\ell-m_\ell^A)
(N_\ell-m_\ell^T)
}{
N_\ell^2(N_\ell-1)
}
},
\label{eq:EO_SE_general}
\end{equation}
with zero contribution when $N_\ell\leq1$.

\section{Numerical experiments}
\label{sec:experiments}

In this section, we outline the numerical experiments performed to evaluate our $\varepsilon$-EP realization algorithm in terms of both utility and privacy preservation.

\subsection{Datasets}
We evaluate our $\varepsilon$-EP realization algorithm on four undirected real networks: Zachary's Karate
Club network \cite{Zachary1977} and the Cora, CiteSeer and PubMed citation
networks \cite{Yang2016Planetoid}.

For the scalability experiment we generate synthetic graphs in four families:
Erd\H{o}s--R\'enyi graphs \cite{erdos59a}, Barab\'asi--Albert graphs \cite{doi:10.1126/science.286.5439.509}, four-block stochastic
block models \cite{HOLLAND1983109}, and four-block degree-corrected stochastic block models \cite{PhysRevE.83.016107}. The size grid was $n\in\{10^3,5\cdot 10^3,10^4,5\cdot 10^4\}$, with target average degree $10$, so all cases are sparse and have $m\approx 5n$. For the block models, the mixing parameter is $0.15$; in the degree-corrected case, within-block propensities are Pareto distributed.

\subsection{$\varepsilon$-EP algorithm and other baselines}
For each graph we find the $\varepsilon$-EPs with the $\varepsilon$-BE algorithm of Squillace et
al. \cite{squillace2024efficient,squillace2026scalable}. We set $\varepsilon$ from percentiles $p\in\{0,25,50,75,100\}$ of the degree sequence to track the effects of coarsening the partitions.

For each value of $\varepsilon$ we test the algorithm with both $(AH,H)$ and $(H,Q)$ releases. We compare them
with three baselines:
\begin{itemize}
    \item Havel--Hakimi \cite{Havel1955,Hakimi1962};
    \item Newman's configuration model \cite{newman_configuration, NewmanStrogatzWatts2001};
    \item a canonical stochastic block model \cite{HOLLAND1983109} using $H$ as indicator matrix. We will name this $H$-SBM, providing further details in Section C of the Supplementary Material.
\end{itemize}
 
While Havel--Hakimi is a deterministic algorithm, i.e. the degree sequence uniquely determines the constructed graph, with $\varepsilon$-EP algorithm and Newman's configuration model we obtain 50 samples per setting, after a burn-in of 50
switches per edge and thinning of 5 switches per edge.

\subsection{Privacy evaluation}
To evaluate the privacy risk of each reconstruction method, we report mean and standard deviation of $1-\EO$, the complement of edge overlap as defined in Section \ref{sec:edge-overlap}. We also show the comparison with the predictors described by Equations \ref{eq:EO_profile} and \ref{eq:EO_SE_general} (see Figure \ref{fig:eo_calibration}).

\subsection{Utility evaluation} \label{sec:utility_eval}
To evaluate the utility of reconstructed graphs, we measure:
\begin{itemize}
    \item average counts of connected components, triangles, wedges, size of largest connected component, global clustering coefficient, assortativity, power-law exponent (see Tables S3--S6 of the Supplementary Material). Since these are scalar quantities, for the utility we compute the relative error;
    \item mean Jensen--Shannon divergence of eigenvector, betweenness, PageRank and Katz centralities (see Figure \ref{fig:main-error-profiles}). For $\varepsilon$-EP reconstructions, we also report the reconstructed vs. original centrality distributions in Figures S8--S15 of the Supplementary Material.
\end{itemize}

\begin{figure}[pos=h]
    \centering
    \includegraphics[width=0.6\linewidth]{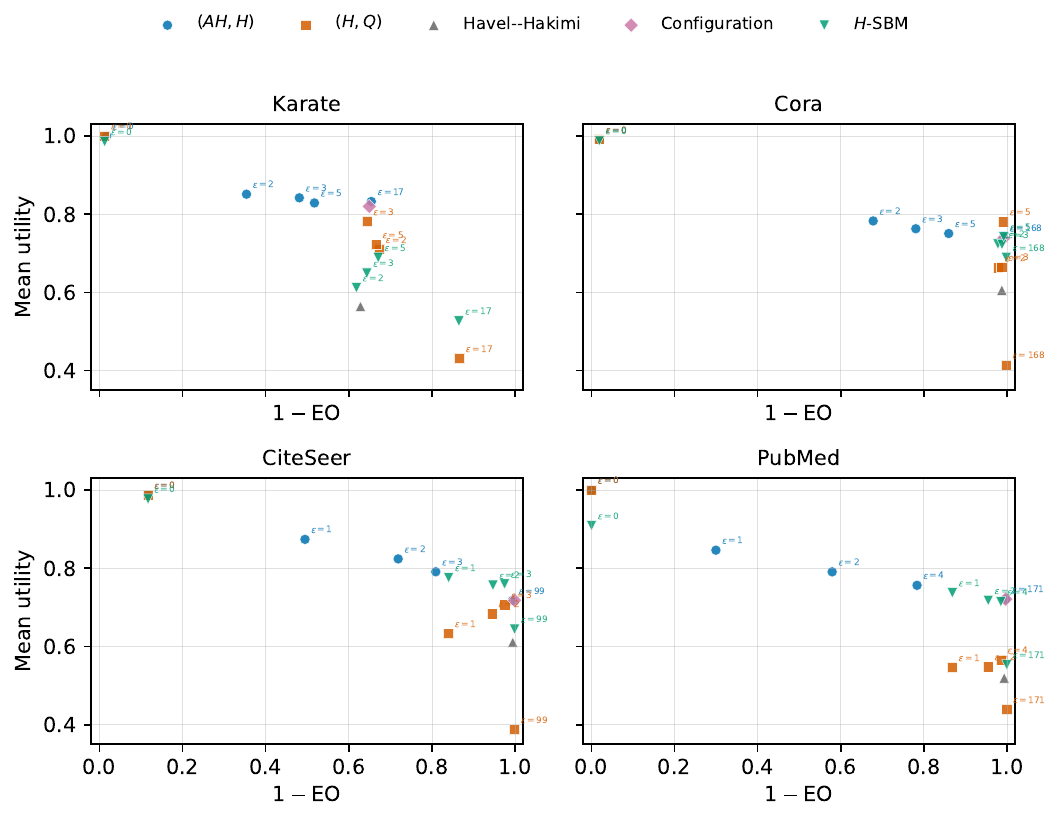}
    \caption{On the x-axis, the complement of the edge overlap, as defined in Section \ref{sec:edge-overlap}. On the y-axis, the mean utility, as defined in Section \ref{sec:utility_eval}. $\varepsilon$-EP with $(AH,H)$ release and $H$-SBM show the best privacy-utility trade-off, with $\varepsilon$-EP capable of reconstructing networks with higher utility.}
    \label{fig:utility_eo}
\end{figure}

\begin{figure}[pos=h]
\centering
\includegraphics[width=0.64\textwidth]{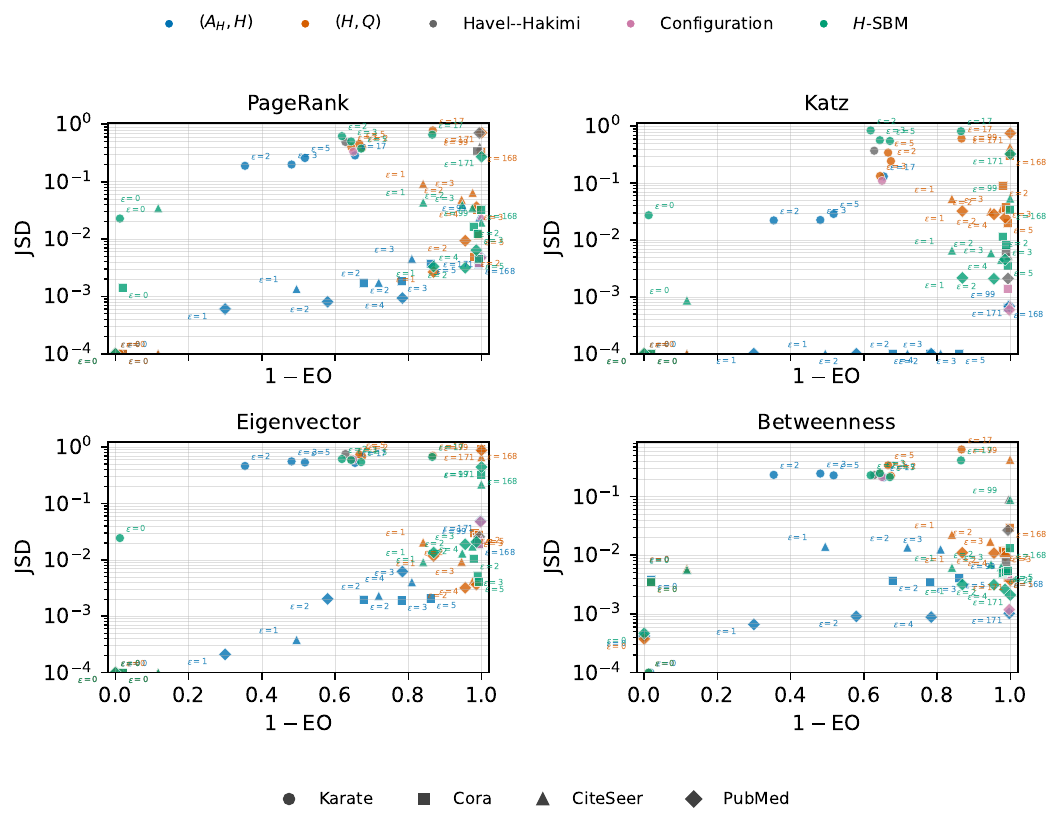}
\caption{Jensen--Shannon divergence against the privacy proxy $1-\mathrm{EO}$. Each point is a reconstructed graph, coloured by method and marked by dataset. For $(A_H,H)$, $(H,Q)$ and $H$-SBM, annotations report the $\varepsilon$ value. JSD values below $10^{-4}$ are displayed on the $10^{-4}$ line for visibility.}
\label{fig:main-error-profiles}
\end{figure}

We report the complete results in Section D of the Supplementary Material, while as a summary in the main text we report in Figure \ref{fig:utility_eo} a single mean utility score, defined as $1$
minus the average error over the above metrics.

Furthermore, triangle counts are not preserved by EP constraints. This is a well-known limitation: for example, 1-WL cannot distinguish a 6-cycle from two
disjoint triangles when all vertices have the same initial colour. Therefore, in Figure S1 of the Supplementary Material, we report the triangle relative
error separately.

\subsection{Diffusion error}
To further assess the error on graph dynamic properties by each reconstruction method, in addition to the error on centrality measures, we define the relative diffusion error at time $t$ as
\begin{equation}
    \frac{\|A^tH-\hat A^tH\|_F}{\|A^tH\|_F}.
\end{equation}
As a reference, we compare the diffusion error with the \emph{lifted quotient dynamics}
\begin{equation}
    \frac{\|A^tH- H Q^t\|_F}{\|A^tH\|_F},
\end{equation}
which measures the dynamical information retained by the quotient
itself.
We measure the error on short-term ($t\in\{1,2,3,4,5\}$) and long-term ($t\in\{2,4,8,16,32\}$) diffusion.
Naturally, for $\varepsilon=0$ the relative diffusion error is zero by definition, so we show the results for $\varepsilon$ set to the degree percentiles $25,50,75,100$.
We report in the main text (Figures \ref{fig:short-diffusion-karate} and \ref{fig:long-diffusion-karate}) the results for the Karate dataset, while the others are reported in Figures S2--S7 of the Supplementary Material.

\subsection{Scalability} \label{sec:scalability}
To empirically assess the time complexity of the $\varepsilon$-EP realization algorithm already discussed in Section \ref{sec:algorithm}, we show with both real and synthetic graphs how time scales approximately linearly with the number of edges $m$ \ref{fig:scalability}.
In particular, we report the base constructive $(AH,H)$ realization step, without considering any edge switch (we already showed in Section \ref{sec:algorithm} that this step adds another $O(m)$ term).

\begin{figure}[pos=t]
    \centering
    \includegraphics[width=\linewidth]{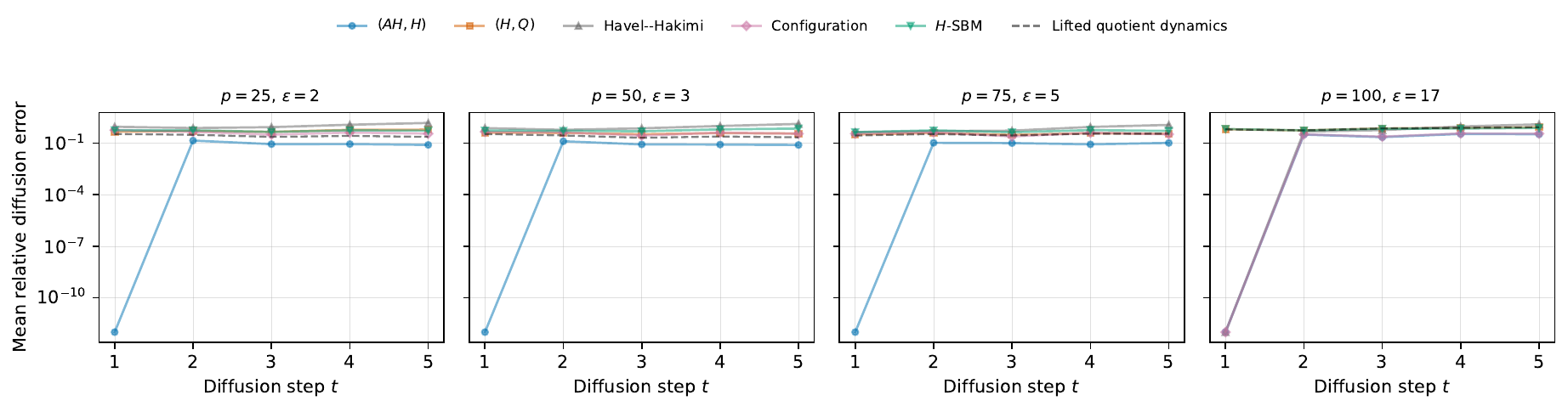}
    \caption{Short-term relative diffusion error for reconstructed Karate dataset. At $t=1$, $\varepsilon$-EP reconstruction with release $(AH,H)$ achieves a mean relative diffusion error lower than other methods by several orders of magnitude. For $t>1$, $\varepsilon$-EP reconstruction with release $(AH,H)$ is still the best method, distancing itself by nearly one order of magnitude from the other methods.}
    \label{fig:short-diffusion-karate}
\end{figure}

\begin{figure}[pos=t]
    \centering
    \includegraphics[width=\linewidth]{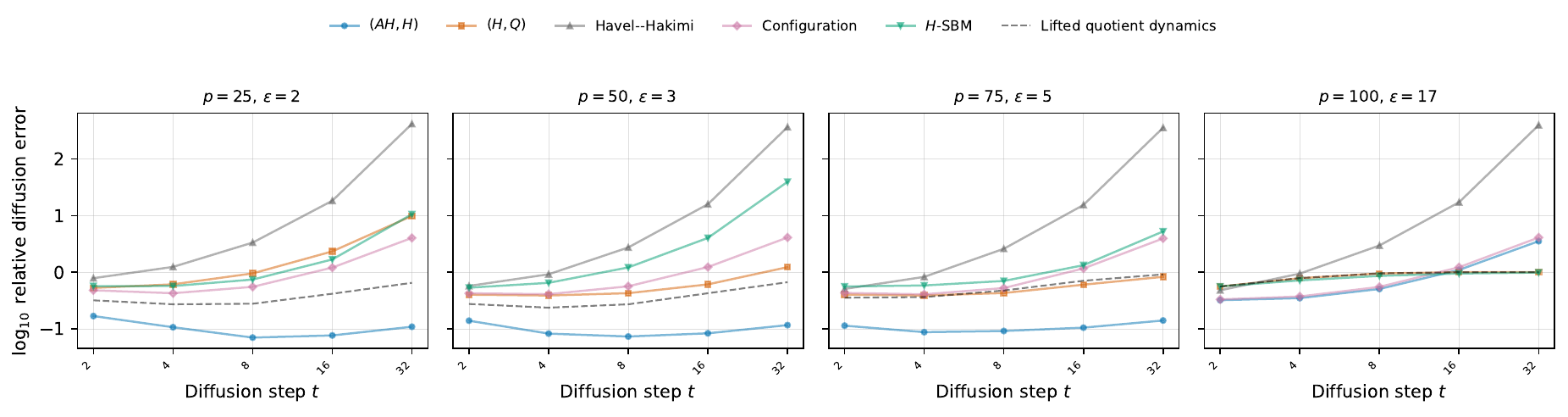}
    \caption{Logarithm of mean long-term relative diffusion error for the reconstructed Karate dataset. $\varepsilon$-EP reconstruction with release $(AH,H)$ is the only method that achieves an error lower than the lifted quotient dynamics, keeping the relative diffusion error at nearly $0.1$. All other methods diverge at $t\geq4$.}
    \label{fig:long-diffusion-karate}
\end{figure}

\section{Discussion}
\label{sec:discussion}

The numerical results support the advantages discussed in this paper for the adoption of approximate EPs as an
intermediate representation between the complete topology of a graph and
the substantially weaker information provided by its degree sequence.

Figure~\ref{fig:utility_eo} shows that increasing $\varepsilon$ generally moves the EP-based reconstructions
towards lower edge overlap, at the price of progressively lower
structural fidelity, naturally, but still preserving better utility then all other methods. In particular, the $(AH,H)$ release provides the most favourable tuning of the privacy--utility trade-off. The $H$-SBM immediately achieves a near-zero edge overlap for $\varepsilon>0$, but with a mean utility always under $0.80$,
whereas the quotient-only $(H,Q)$ release loses substantially more
information as the partition becomes coarse.

The scalar statistics in Tables S3--S6 in the Supplementary Material clarify which part of this utility
comes directly from the information retained by the release. In the
$(AH,H)$ case, the degree of every vertex is preserved because it is the
sum of its block degrees. Consequently, quantities determined by the
degree sequence are preserved as well. This can be seen, for example, in
the exact preservation of the wedge count and of the fitted power-law
exponent across the $(AH,H)$ reconstructions. The method also maintains
assortativity remarkably well over a wide range of $\varepsilon$ values
on the citation networks. At the same time, properties that depend on the precise arrangement of edges (connected components, size of the largest
connected component and clustering coefficient, which are not invariant quantities under EP) progressively deteriorate as the partition is coarsened.

The comparison with degree-sequence methods is particularly informative in this respect: Havel--Hakimi can strongly distort clustering, connectivity and triangle counts, while randomized degree-preserving reconstruction tends towards graphs with very low clustering. Our $\varepsilon$-EP realization algorithm, on the other hand, appears to retain more additional mesoscopic information.

The same distinction becomes even clearer when considering centrality measures. Figure~\ref{fig:main-error-profiles} shows that $\varepsilon$-EP realized graphs, and in particular
the $(AH,H)$ release, achieve the best fidelity to centrality measure even at low edge overlap. It is possible to appreciate this result even better by looking at Figures S8--S15 of the Supplementary Material. Betweenness centrality is not
algebraically determined by the EP in the same way, but its distribution is nevertheless reproduced more accurately than by the degree-only baselines.

The diffusion experiments provide the most direct validation of the original motivation for $\varepsilon$-EP realization problem. On all datasets, $\varepsilon$-EP reconstructions with release $(AH,H)$ at $t=1$ achieve relative diffusion errors
several orders of magnitude below the competing methods. The advantage remains visible for subsequent time steps. More importantly, the long-term
experiment shows that it is the only reconstruction
method whose error remains below that of the lifted quotient dynamics, with an error of approximately $10^{-1}$, whereas the other reconstruction methods rapidly diverge as the number of diffusion steps increases. 

Finally, the scalability results in Figure~\ref{fig:scalability} support the complexity
analysis of Section~\ref{sec:scalability}. The realization time grows approximately linearly with the number of edges $m$ over both the real-world networks and the
four families of synthetic graphs: the least-squares fit of elapsed time against $m$ gives $R^2=1.000$ and $R^2=0.991$,
respectively.

\section{Conclusion}
\label{sec:conclusion}

In this work, we introduced EP-realizability, a graph realization
problem in which the available information is not a degree
sequence but an (approximate) equitable partition.
Not only we prove that the defined problem has a solution, but the proof also constructively provides an algorithm whose complexity is near-linear in the number of edges of the graph to be realized.

This formulation provides a direct connection between graph realization and the dynamical information encoded by equitable partitions, such as eigenvector, PageRank and Katz centralities.

We also derived an analytical predictor for edge overlap between the original and the EP-reconstructed graphs, showing
empirically its correctness.
Experiments show that approximate
EP-realizability provides a tunable reconstruction trade-off:
coarser partitions progressively reduce the recovery of original edges (preserving privacy), 
while retaining substantially more structural and
dynamical utility than degree-only reconstruction or block models (preserving utility). This advantage is particularly visible for centrality
distributions and linear diffusion dynamics.

Several directions naturally follow from these results. First, we aim at ways of improving our algorithm so that it can preserves quantities that are not immediately invariant under EP, like the triangle count and therefore other dependent quantities like global clustering coefficient. For example, the $\varepsilon$-BE algorithm by Squillace et al. \cite{squillace2024efficient, squillace2026scalable} allows to set an initial pre-partition, so it is possible to isolate components or introduce any type of bias before the coursening.

Second, the present work considers simple undirected unweighted graphs,
but extending EP-realizability to directed and weighted networks is a
natural next step. The $\varepsilon$-BE algorithm already
provides $\varepsilon$-EPs in both of these settings, so
the partition-construction stage does not constitute an obstacle.

More generally, the results of this work suggest that an equitable partition can be regarded not only as a tool for graph reduction or dynamical analysis, as it as already been done in the literature \cite{o2013observability,bonaccorsi2015epidemic, scholkemper2023optimization, squillace2024efficient, squillace2026scalable}, but also as a generative model: EP-realizability serves as the
combinatorial substrate for new types of configuration models, opening the way to dynamics-preserving network reconstruction and synthetic graph generation methods.

\printcredits

\section*{Declaration of generative AI and AI-assisted technologies in the writing process}
During the preparation of this work the author used AI-assisted tools to improve the writing of the manuscript. After using these tools, the author reviewed and edited the content as needed and takes full responsibility for the content of the publication.

\bibliographystyle{cas-model2-names}
\bibliography{ep_realizability_final}

\bio{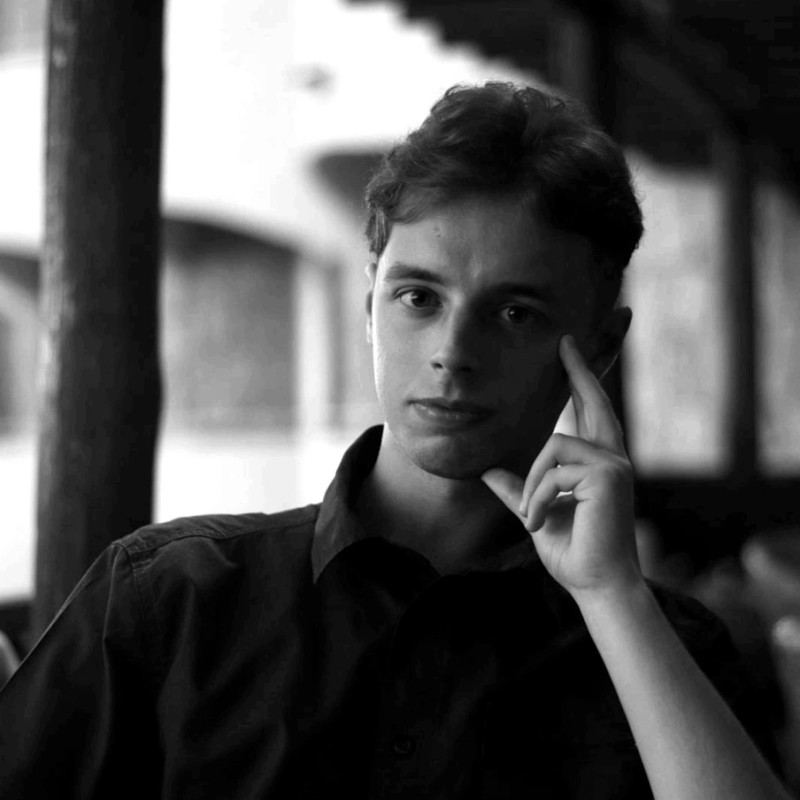}
Riccardo Porcedda is a PhD candidate in the Italian National Ph.D. Program in Artificial Intelligence for Society, with a scolarship from the Scuola Superiore Sant’Anna, Pisa, Italy. His research focuses mostly on graph neural networks, synthetic network generation, equitable partitions and evaluation of LLMs. He holds a bachelor's degree in Physics at the University of Pisa and a master's degree in Data Science at the University of Milano-Bicocca. He has been a visiting researcher at the Complexity Science Hub in Vienna and at the Computational Netowrk Science group of the RWTH Aachen University. He also had experiences in startups, working in the field of xAI and liquidity optimization in financial networks (entering also the Fintech Milano Hub of Banca d'Italia for the project "B2Bridge" while being Chief Data Officer of Liqex Srl). In December 2025 he founded Little-g.AI, a startup focused on the development of efficient foundational models alternatives to LLMs for applications on the academic publishing processes.
\endbio

\end{document}


\title{Supplementary Material for\\Equitable Partition Realizability for Dynamics-preserving and Privacy-aware Network Reconstruction}

\author{Riccardo Porcedda}

\date{}

\maketitle

\appendix

\section{Relation between $\varepsilon$-BE and $\varepsilon$-EP}
\label{app:epsbe-ep}

The approximate equitable partitions used in this work are computed
using the $\varepsilon$-BE partition-refinement algorithm introduced
by Squillace et al.~\cite{squillace2024efficient, squillace2026scalable}.
In this section, we clarify their
relationship and show that $\varepsilon$-BE provides a sufficient,
but not necessary, condition for our definition of $\varepsilon$-EP.

Let $A=A^\top\in\{0,1\}^{n\times n}$ be the adjacency matrix of a
simple undirected graph and let $\mathcal{P}=\{B_1,\ldots,B_k\}$
be a partition of its nodes, represented by the membership matrix
$H\in\{0,1\}^{n\times k}$. Recall that $(AH)_{i\beta} = \sum_{j\in B_\beta}A_{ij}$ is the number of neighbours of node $i$ belonging to block $B_\beta$.
For convenience, we write $d_{i\beta}:=(AH)_{i\beta}$.

For undirected graphs, Squillace et al.~\cite{squillace2024efficient, squillace2026scalable}
define a partition $\mathcal{P}$ to be an $\varepsilon$-BE partition
if, for every pair of blocks $B_\alpha,B_\beta$ and every pair of
nodes $i,j\in B_\alpha$,
\begin{equation}
    \left|
    d_{i\beta}-d_{j\beta}
    \right|
    \leq \varepsilon.
    \label{eq:eps-be-def}
\end{equation}
Thus, within each source block $B_\alpha$, the block-degrees towards
every target block $B_\beta$ are required to have range at most
$\varepsilon$.
We already defined the quotient $Q=(H^\top H)^{-1}H^\top AH$ and, since $H^\top H=\operatorname{diag}(|B_1|,\ldots,|B_k|)$, its entries
are
\begin{equation} \label{eq:qavg-blockmean}
Q_{\alpha\beta}
    =
    \frac{1}{|B_\alpha|}
    \sum_{j\in B_\alpha}d_{j\beta}.
\end{equation}
Hence, $Q_{\alpha\beta}$ is the average number of
neighbours in $B_\beta$ of the nodes belonging to $B_\alpha$.
We call $H$ an $\varepsilon$-EP whenever
\begin{equation}
    \left\|
    AH-HQ
    \right\|_\infty
    \leq\varepsilon,
    \label{eq:eps-ep-def-app}
\end{equation}
or, equivalently,
\begin{equation}
    \left|
    d_{i\beta}-Q_{\alpha\beta}
    \right|
    \leq\varepsilon
    \qquad
    \forall i\in B_\alpha,\ \forall\alpha,\beta.
\end{equation}
Unlike $\varepsilon$-BE, which controls the pairwise diameter of the
block-degree values, the inequality~\eqref{eq:eps-ep-def-app} controls
their distance from the block mean.

\begin{proposition}
\label{prop:epsbe-implies-ep}
If $H$ is the membership matrix of an $\varepsilon$-BE partition of $A$, then it is also the membership matrix of an
$\varepsilon$-EP according to inequality~\eqref{eq:eps-ep-def-app}.
That is,
\begin{equation*}
    H\text{ is }\varepsilon\text{-BE}
    \quad\Longrightarrow\quad
    \left\|AH-HQ\right\|_\infty
    \leq\varepsilon.
\end{equation*}
The converse does not hold in general.
\end{proposition}

\begin{proof}
Fix two blocks $B_\alpha$ and $B_\beta$ and a node
$i\in B_\alpha$. From~\eqref{eq:qavg-blockmean},
\begin{align*}
\left|
d_{i\beta}-Q_{\alpha\beta}
\right|
&=
\left|
d_{i\beta}
-
\frac{1}{|B_\alpha|}
\sum_{j\in B_\alpha}d_{j\beta}
\right|
\\
&=
\left|
\frac{1}{|B_\alpha|}
\sum_{j\in B_\alpha}
(d_{i\beta}-d_{j\beta})
\right|
\\
&\leq
\frac{1}{|B_\alpha|}
\sum_{j\in B_\alpha}
\left|d_{i\beta}-d_{j\beta}\right|.
\end{align*}
If $H$ is the membership matrix of an $\varepsilon$-BE partition, each term in the final sum
is at most $\varepsilon$ by~\eqref{eq:eps-be-def}. Therefore,
\[
\left|
d_{i\beta}-Q_{\alpha\beta}
\right|
\leq
\frac{1}{|B_\alpha|}
\sum_{j\in B_\alpha}\varepsilon
=
\varepsilon.
\]
Since this holds for every $i,\alpha,\beta$, then $\left\|AH-HQ\right\|_\infty
\leq\varepsilon$.

To show that the converse is false, consider a block containing four
nodes whose block-degrees towards some block $B_\beta$ are 
$
(d_{1\beta},d_{2\beta},d_{3\beta},d_{4\beta})
=
(0,0,2,2).
$
Their block average is $Q_{\alpha\beta}=1$.
Consequently, $\max_{i\in B_\alpha}
\left|
d_{i\beta}-Q_{\alpha\beta}
\right|
=1,
$ and hence this coordinate satisfies the $1$-EP condition. However, $\max_{i,j\in B_\alpha}
|d_{i\beta}-d_{j\beta}|
=
2,
$ which violates the $1$-BE condition. Therefore, a $1$-EP need not be a $1$-BE.
\end{proof}





This observation justifies the use of the $\varepsilon$-BE algorithm
of Squillace et al.~\cite{squillace2024efficient, squillace2026scalable} to compute
the membership matrix $H$ in our framework. Any partition returned
by the algorithm with tolerance $\varepsilon$ automatically satisfies
our $\varepsilon$-EP condition with the same tolerance. Therefore,
the algorithm provides a sufficient constructive procedure for
obtaining admissible approximate EPs.

\section{A Toy Example of Exact and Approximate EP Realizability}
\label{app:ep-realizability-example}

This section explains in detail the example depicted in Figure 1, where we show how an exact EP can uniquely determine a graph, while relaxing the equitability condition may lead to multiple non-isomorphic realizations.

\subsection{Exact EP and unique realization}

Let us consider a simple, undirected, unweighted graph $G_0=(V,E)$ with $V=\{1,\ldots,8\}$, and the partition $\mathcal{P}_0=\{C_1,C_2,C_3,C_4\}$, where $C_1=\{1,2\},\quad
C_2=\{3,4\},\quad
C_3=\{5,6\},\quad
C_4=\{7,8\}.
$

The corresponding characteristic matrix is

$$
H_0=
\begin{pmatrix}
1&0&0&0\\
1&0&0&0\\
0&1&0&0\\
0&1&0&0\\
0&0&1&0\\
0&0&1&0\\
0&0&0&1\\
0&0&0&1
\end{pmatrix}.
$$

Let the quotient matrix be

$$
Q_0=
\begin{pmatrix}
0&0&0&0\\
0&0&0&2\\
0&0&1&2\\
0&2&2&1
\end{pmatrix}.
$$

Therefore, the exact EP condition is

$$
A_0H_0=H_0Q_0.
$$

The pair $(H_0,Q_0)$ uniquely determines the graph, up to relabeling. To see this, notice that all cells have size two. Hence, for every off-diagonal block $A_{ij}$, $i\neq j$, an entry $q_{ij}=0$ forces the corresponding bipartite block to be empty, whereas $q_{ij}=2=|C_j|$ forces it to be complete bipartite. Similarly, for a diagonal block of size two, $q_{ii}=0$ forces the two vertices to be disconnected, while $q_{ii}=1$ forces the unique possible internal edge.

Consequently,
\begin{itemize}
\item $C_1$ has no internal or external edges;
\item $C_2$ has no internal edge and is completely connected to $C_4$;
\item $C_3$ contains its unique internal edge and is completely connected to $C_4$;
\item $C_4$ contains its unique internal edge.
\end{itemize}

Thus, the unique adjacency matrix is

$$
A_0=
\begin{pmatrix}
0&0&0&0&0&0&0&0\\
0&0&0&0&0&0&0&0\\
0&0&0&0&0&0&1&1\\
0&0&0&0&0&0&1&1\\
0&0&0&0&0&1&1&1\\
0&0&0&0&1&0&1&1\\
0&0&1&1&1&1&0&1\\
0&0&1&1&1&1&1&0
\end{pmatrix}.
$$




\subsection{Relaxing the EP}

We now relax equitability condition and consider an approximate equitable partition with $\varepsilon=1$. The four exact cells can be merged into two larger cells, $B_1=C_1\cup C_2=\{1,2,3,4\},
\quad
B_2=C_3\cup C_4=\{5,6,7,8\}$.

The corresponding characteristic matrix is

$$
H_1=
\begin{pmatrix}
1&0\\
1&0\\
1&0\\
1&0\\
0&1\\
0&1\\
0&1\\
0&1
\end{pmatrix},
$$
for which we get the quotient $Q_1=(H_1^\top H_1)^{-1}H_1^\top A_0H_1=
\begin{pmatrix}
0&1\\
1&3
\end{pmatrix}.
$
Therefore, it is easy to verify that $\left\|A_0H_1-H_1Q_1\right\|_{\infty}=1$ and $\mathcal{P}_1=\{B_1,B_2\}$ is a valid $1$-EP.














\subsection{Characterization of all realizations of the same $1$-EP}

We now  ask how many simple undirected graphs $A$ satisfy $\|AH_1-H_1Q_1\|_\infty\leq 1$.
The quotient entries $q_{\alpha \beta}$ impose strong constraints on the two diagonal blocks. Since $q_{11}=0$, the average internal degree in $B_1$ is zero. As all degrees are non-negative, this implies $G[B_1]=\overline{K}_4$.
Similarly, $q_{22}=3$. Since a vertex in a four-node simple graph can have at most three internal neighbors, every vertex in $B_2$ must have internal degree three, and therefore $G[B_2]=K_4$.
Finally, $q_{12}=q_{21}=1$ implies that the number of edges between $B_1$ and $B_2$ is exactly four.

Hence, every realization has the block form
$$
A=
\begin{pmatrix}
0_{4\times4} & X\\
X^\top & J_4-I_4
\end{pmatrix},
$$
where $X\in\{0,1\}^{4\times4}$ is the incidence matrix of the bipartite graph between $B_1$ and $B_2$.
Because the target cross-cell degree is $1$ and $\varepsilon=1$, every row and every column of $X$ must have sum in $\{0,1,2\}$.
Moreover, $\sum_{i,j}X_{ij}=4$, thus, the complete realization space is
$$
\mathcal{R}
=
\left\{
X\in\{0,1\}^{4\times4}:
\sum_{i,j}X_{ij}=4,\;
\max_i\sum_jX_{ij}\leq2,\;
\max_j\sum_iX_{ij}\leq2
\right\}.
$$

\subsection{Non-isomorphic realizations}

The bipartite graph associated with $X$ contains four edges and has maximum degree two. Therefore, each of its non-isolated connected components must be either a path or an even cycle.

There is an additional useful observation. In the full graph, every vertex in $B_1$ has degree at most two, whereas every vertex in $B_2$ has degree at least three, because $B_2$ induces a $K_4$. Consequently, any graph isomorphism necessarily maps $B_1$ onto $B_1$ and $B_2$ onto $B_2$. The relevant equivalence relation on $X$ is therefore given by independent row and column permutations, i.e., $X\sim P_{B_1}X P_{B_2}^\top,
\quad
P_{B_1},P_{B_2}\in S_4.
$
Under this equivalence relation, there are exactly ten non-isomorphic realizations, reported in Table S\ref{tab:approx-ep-realizations}. The original graph $A_0$ belongs to the first class.

\begin{table}[ht]
\centering
\small
\caption{The ten non-isomorphic realizations of the cross-cell bipartite graph. The final column gives the number of distinct labeled realizations in the corresponding orbit under independent permutations of $B_1$ and $B_2$.}
\label{tab:approx-ep-realizations}
\begin{tabular}{c|l|r}
\hline
Class & Structure & labeled graphs\\
\hline
1 &
$C_4$ &
36

\\

2 &
$P_5$, $2B_1+3B_2$ &
144
\\

3 &
$P_5$, $3B_1+2B_2$ &
144
\\

4 &
$P_4\cup P_2$ &
576
\\

5 &
$P_3\cup P_3$, centers in $B_1$ &
36
\\

6 &
$P_3\cup P_3$, centers split &
144
\\

7 &
$P_3\cup P_3$, centers in $B_2$ &
36
\\

8 &
$P_3\cup2P_2$, center in $B_1$ &
144
\\

9 &
$P_3\cup2P_2$, center in $B_2$ &
144
\\

10 &
$4P_2$ &
24
\\
\hline
\end{tabular}
\end{table}







\subsection{Number of labeled and non-isomorphic realizations}

The ten rows of Table S\ref{tab:approx-ep-realizations} form the ten orbits of admissible bipartite matrices under the action of $S_4\times S_4$.
Thus, when vertex labels are ignored, the approximate EP has 10 non-isomorphic realizations. When all distinct labellings are considered, while identifying permutations that leave the adjacency matrix unchanged through automorphisms, the orbit sizes sum to 1428.
Equivalently,
$$
\left|
\left\{
X\in\{0,1\}^{4\times4}:
\|X\|_0=4,\;
\max_i(X\mathbf{1})_i\leq2,\;
\max_j(X^\top\mathbf{1})_j\leq2
\right\}
\right|
=
1428.
$$

This example highlights the loss of identifiability induced by approximate equitability. The exact representation $(H_0,Q_0)$ uniquely determines the complete adjacency matrix, while the relaxed representation $(H_1,Q_1)$ preserves a meaningful two-dimensional quotient, but admits a large set of topologically distinct realizations. Hence, increasing the EP tolerance can substantially enlarge the realization space without necessarily collapsing the quotient.

\section{Details about \(H\)-SBM}
\label{app:h-sbm-baseline}

For completeness, we describe the canonical partition-conditioned stochastic
block model. Let $H$ be the membership matrix of the partition
\(\{B_1,\ldots,B_k\}\), with \(s_\alpha=|B_\alpha|\). For every unordered block
pair \((\alpha,\beta)\), let \(m_{\alpha\beta}\) be the number of edges of the
original graph with one endpoint in \(B_\alpha\) and one endpoint in
\(B_\beta\). When \(\alpha=\beta\), \(m_{\alpha\alpha}\) denotes the number of
internal edges in \(B_\alpha\).

We define the probability matrix
\(P\) as
\[
P_{\alpha\beta}
=
\begin{cases}
\dfrac{m_{\alpha\alpha}}{\binom{s_\alpha}{2}},
& \alpha=\beta,\ s_\alpha\geq 2,\\[1.2em]
\dfrac{m_{\alpha\beta}}{s_\alpha s_\beta},
& \alpha\neq\beta,\\[1.2em]
0,
& \text{if the corresponding layer has zero capacity.}
\end{cases}
\]
Since we are dealing with undirected graphs, \(P_{\alpha\beta}=P_{\beta\alpha}\).
A graph \(\hat A\) is then sampled independently over all
admissible node pairs, with $\Pr[\hat A_{ij}=1]=P_{\alpha\beta},
\quad i\in B_\alpha,\ j\in B_\beta,\ i\neq j$.
Thus, within each block \(B_\alpha\), the model samples an Erd\H{o}s--R\'enyi
graph with parameter \(P_{\alpha\alpha}\), while between two distinct blocks it
samples an independent bipartite Bernoulli graph with parameter
\(P_{\alpha\beta}\).
This baseline preserves the expected number of edges in every partition layer:
\[
\mathbb E[\hat m_{\alpha\alpha}]
=
\binom{s_\alpha}{2}P_{\alpha\alpha}
=
m_{\alpha\alpha},
\qquad
\mathbb E[\hat m_{\alpha\beta}]
=
s_\alpha s_\beta P_{\alpha\beta}
=
m_{\alpha\beta}
\quad (\alpha\neq\beta).
\]

\clearpage

\section{Additional experiment results}
\label{app:experiments}

\begin{table}[h]
\centering
\caption{$\varepsilon$-BE partition diagnostics. The column $p$ is the requested percentile of the degree sequence used to set $\varepsilon$; $\|AH-HQ\|_\infty$ is the maximum node-to-block residual.}
\label{tab:partition-summary}
\scriptsize
\resizebox{\textwidth}{!}{%
\begin{tabular}{llrrrrrrrl}
\toprule
Dataset & $p$ & $\varepsilon$ & $k$ & $k/n$ & Min & Med. & Max & Singleton nodes (\%) & $\|AH-HQ\|_\infty$ \\
 & & & & & \multicolumn{3}{c}{Block size} & & \\
\midrule
\multirow{5}{*}{Karate} & 0 & 0 & 27 & 0.7941 & 1 & 1 & 5 & 67.6 & 0 \\
 & 25 & 2 & 6 & 0.1765 & 1 & 2 & 18 & 5.9 & 1.556 \\
 & 50 & 3 & 4 & 0.1176 & 2 & 4 & 24 & 0.0 & 2.25 \\
 & 75 & 5 & 3 & 0.0882 & 2 & 3 & 29 & 0.0 & 2 \\
 & 100 & 17 & 1 & 0.0294 & 34 & 34 & 34 & 0.0 & 12.412 \\
\addlinespace
\multirow{5}{*}{Cora} & 0 & 0 & 2365 & 0.8733 & 1 & 1 & 114 & 81.7 & 0 \\
 & 25 & 2 & 86 & 0.0318 & 1 & 2 & 1582 & 1.4 & 1.996 \\
 & 50 & 3 & 41 & 0.0151 & 1 & 3 & 1989 & 0.6 & 2.972 \\
 & 75 & 5 & 18 & 0.0066 & 1 & 3 & 2404 & 0.2 & 4.475 \\
 & 100 & 168 & 1 & 0.0004 & 2708 & 2708 & 2708 & 0.0 & 164.102 \\
\addlinespace
\multirow{5}{*}{CiteSeer} & 0 & 0 & 2090 & 0.6282 & 1 & 1 & 498 & 54.1 & 0 \\
 & 25 & 1 & 291 & 0.0875 & 1 & 1 & 1379 & 5.2 & 0.999 \\
 & 50 & 2 & 90 & 0.0271 & 1 & 2 & 2174 & 1.1 & 1.999 \\
 & 75 & 3 & 44 & 0.0132 & 1 & 2 & 2612 & 0.5 & 2.89 \\
 & 100 & 99 & 1 & 0.0003 & 3327 & 3327 & 3327 & 0.0 & 96.264 \\
\addlinespace
\multirow{5}{*}{PubMed} & 0 & 0 & 12998 & 0.6592 & 1 & 1 & 41 & 55.2 & 0 \\
 & 25 & 1 & 1629 & 0.0826 & 1 & 1 & 12374 & 6.0 & 1 \\
 & 50 & 2 & 594 & 0.0301 & 1 & 1 & 13995 & 1.7 & 1.999 \\
 & 75 & 4 & 206 & 0.0104 & 1 & 2 & 15545 & 0.5 & 3.966 \\
 & 100 & 171 & 1 & \(5.1\times 10^{-5}\) & 19717 & 19717 & 19717 & 0.0 & 166.504 \\
\bottomrule
\end{tabular}%
}
\end{table}

\begin{figure}[h]
    \centering
    \includegraphics[width=0.7\linewidth]{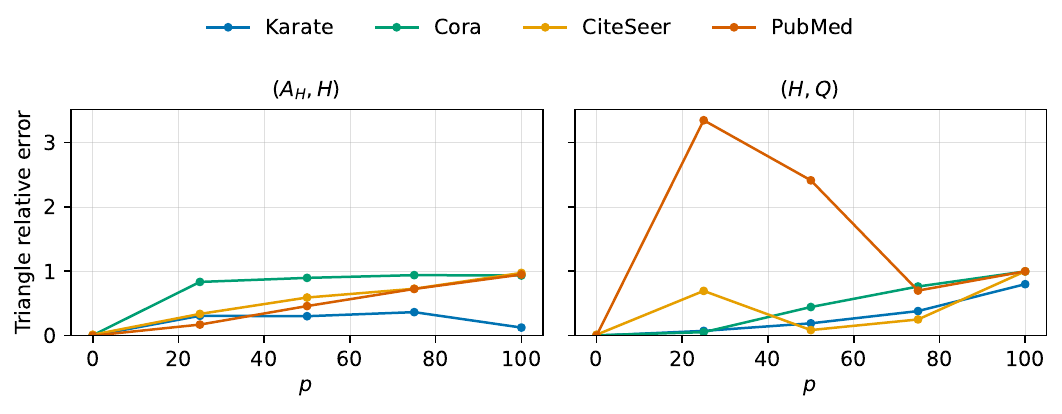}
    \caption{Triangle relative error across requested degree percentiles. Triangles are not constrained by the EP realizability problem and therefore need not be preserved by 1-WL-induced partitions.}
    \label{fig:utility-errors-percentiles}
\end{figure}

\begin{table}[h]
\centering
\caption{Original and reconstructed scalar metrics of Karate dataset. Avg.Rank is the average rank over the seven scalar errors; in bold we report the values below the average metric values over all methods}
\label{tab:scalar-errors-karate}
\scriptsize
\resizebox{\textwidth}{!}{%
\begin{tabular}{lr|rrrrrrr|r}
\toprule
Method & $p$ & Components & LCC & Clustering & Triangles & Wedges & Assortativity & Power-law & Avg.Rank \\
\midrule
Original & -- & 1 & 34 & 0.571 & 45 & 528 & -0.476 & 1.507 & -- \\
\midrule
\multirow{5}{*}{$(AH,H)$} & 0 & \textbf{1} & \textbf{34} & \textbf{0.571} & \textbf{45} & \textbf{528} & \textbf{-0.476} & \textbf{1.507} & \textbf{3.500} \\
 & 25 & \textbf{1} & \textbf{34} & 0.33 & 31 & \textbf{528} & \textbf{-0.476} & \textbf{1.507} & \textbf{6.429} \\
 & 50 & \textbf{1} & \textbf{34} & 0.337 & 31 & \textbf{528} & -0.472 & \textbf{1.507} & \textbf{6.571} \\
 & 75 & \textbf{1} & \textbf{34} & 0.307 & 29 & \textbf{528} & -0.472 & \textbf{1.507} & \textbf{7.429} \\
 & 100 & \textbf{1} & \textbf{34} & 0.353 & 40 & \textbf{528} & -0.295 & \textbf{1.507} & \textbf{6.571} \\
\addlinespace
\multirow{5}{*}{$(H,Q)$} & 0 & \textbf{1} & \textbf{34} & \textbf{0.571} & \textbf{45} & \textbf{528} & \textbf{-0.476} & \textbf{1.507} & \textbf{3.500} \\
 & 25 & 2 & 33 & 0.319 & 46 & 564 & -0.478 & 1.519 & 10.857 \\
 & 50 & \textbf{1} & \textbf{34} & 0.337 & 36 & 536 & -0.426 & 1.522 & 10.429 \\
 & 75 & \textbf{1} & \textbf{34} & 0.37 & 28 & 509 & -0.442 & 2.08 & 9.571 \\
 & 100 & \textbf{1} & \textbf{34} & 0.096 & 9 & 284 & -0.037 & 4.777 & 13.571 \\
\addlinespace
\multirow{5}{*}{$H$-SBM} & 0 & \textbf{1} & \textbf{34} & 0.589 & 46 & 530 & -0.48 & \textbf{1.507} & \textbf{6.286} \\
 & 25 & 2 & 33 & 0.304 & 37 & 555 & -0.474 & 1.516 & 11.143 \\
 & 50 & 2 & 33 & 0.297 & 40 & 569 & -0.426 & 1.52 & 12.286 \\
 & 75 & 2 & 33 & 0.275 & 32 & 540 & -0.432 & 1.545 & 12.286 \\
 & 100 & \textbf{1} & \textbf{34} & 0.131 & 16 & 347 & -0.064 & 1.629 & 14.429 \\
\addlinespace
Havel--Hakimi & -- & 4 & 24 & 0.721 & 88 & \textbf{528} & -0.045 & \textbf{1.507} & 11.429 \\
\addlinespace
Configuration & -- & \textbf{1} & \textbf{34} & 0.34 & 39 & \textbf{528} & -0.306 & \textbf{1.507} & \textbf{6.714} \\
\bottomrule
\end{tabular}%
}
\end{table}

\begin{table}[h]
\centering
\caption{Original and reconstructed scalar metrics of Cora dataset. Avg.Rank is the average rank over the seven scalar errors; in bold we report the values below the average metric values over all methods}
\label{tab:scalar-errors-cora}
\scriptsize
\resizebox{\textwidth}{!}{%
\begin{tabular}{lr|rrrrrrr|r}
\toprule
Method & $p$ & Components & LCC & Clustering & Triangles & Wedges & Assortativity & Power-law & Avg.Rank \\
\midrule
Original & -- & 78 & 2485 & 0.241 & 1630 & 52301 & -0.066 & 1.566 & -- \\
\midrule
\multirow{5}{*}{$(AH,H)$} & 0 & 74 & \textbf{2485} & \textbf{0.233} & 1620 & \textbf{52301} & \textbf{-0.066} & \textbf{1.566} & \textbf{2.571} \\
 & 25 & 34 & 2637 & 0.019 & 270 & \textbf{52301} & \textbf{-0.066} & \textbf{1.566} & \textbf{5.286} \\
 & 50 & 25 & 2657 & 0.012 & 167 & \textbf{52301} & \textbf{-0.066} & \textbf{1.566} & \textbf{6.143} \\
 & 75 & 19 & 2668 & 0.007 & 98 & \textbf{52301} & \textbf{-0.066} & \textbf{1.566} & \textbf{7.571} \\
 & 100 & 14 & 2680 & 0.007 & 107 & \textbf{52301} & -0.029 & \textbf{1.566} & \textbf{8.857} \\
\addlinespace
\multirow{5}{*}{$(H,Q)$} & 0 & 74 & \textbf{2485} & \textbf{0.233} & 1620 & \textbf{52301} & \textbf{-0.066} & \textbf{1.566} & \textbf{2.714} \\
 & 25 & 686 & 1716 & 0.008 & 1542 & 112367 & 0.137 & 1.631 & 12.857 \\
 & 50 & 443 & 1824 & 0.005 & 905 & 80797 & 0.215 & 1.675 & 13.571 \\
 & 75 & 92 & 2452 & 0.003 & 386 & 61623 & 0.138 & 1.606 & 10.143 \\
 & 100 & 1 & 2708 & \(8.3\times 10^{-4}\) & 4 & 15420 & 0.001 & 3.269 & 14.857 \\
\addlinespace
\multirow{5}{*}{$H$-SBM} & 0 & \textbf{80} & 2482 & \textbf{0.233} & \textbf{1626} & 52405 & -0.067 & 1.557 & \textbf{4.571} \\
 & 25 & 229 & 2467 & 0.008 & 118 & 56570 & -0.07 & 1.552 & 9.000 \\
 & 50 & 190 & 2508 & 0.006 & 92 & 56286 & -0.07 & 1.552 & 10.143 \\
 & 75 & 140 & 2562 & 0.006 & 69 & 54376 & -0.07 & 1.551 & 10.429 \\
 & 100 & 59 & 2648 & 0.001 & 10 & 20679 & \(-6.0\times 10^{-5}\) & 1.514 & 12.143 \\
\addlinespace
Havel--Hakimi & -- & 652 & 702 & 0.731 & 8172 & \textbf{52301} & 0.178 & \textbf{1.566} & 13.143 \\
\addlinespace
Configuration & -- & 13 & 2682 & 0.007 & 105 & \textbf{52301} & -0.03 & \textbf{1.566} & 9.000 \\
\bottomrule
\end{tabular}%
}
\end{table}

\begin{table}[h]
\centering
\caption{Original and reconstructed scalar metrics of CiteSeer dataset. Avg.Rank is the average rank over the seven scalar errors; in bold we report the values below the average metric values over all methods}
\label{tab:scalar-errors-citeseer}
\scriptsize
\resizebox{\textwidth}{!}{%
\begin{tabular}{lr|rrrrrrr|r}
\toprule
Method & $p$ & Components & LCC & Clustering & Triangles & Wedges & Assortativity & Power-law & Avg.Rank \\
\midrule
Original & -- & 438 & 2120 & 0.141 & 1167 & 26918 & 0.048 & 1.715 & -- \\
\midrule
\multirow{5}{*}{$(AH,H)$} & 0 & \textbf{423} & \textbf{2120} & 0.129 & 1150 & \textbf{26918} & \textbf{0.048} & \textbf{1.715} & \textbf{2.571} \\
 & 25 & 381 & 2482 & 0.038 & 773 & \textbf{26918} & \textbf{0.048} & \textbf{1.715} & \textbf{4.429} \\
 & 50 & 339 & 2557 & 0.015 & 476 & \textbf{26918} & \textbf{0.048} & \textbf{1.715} & \textbf{5.857} \\
 & 75 & 290 & 2677 & 0.007 & 317 & \textbf{26918} & \textbf{0.048} & \textbf{1.715} & \textbf{7.714} \\
 & 100 & 168 & 3008 & 0.002 & 30 & \textbf{26918} & -0.015 & \textbf{1.715} & 10.429 \\
\addlinespace
\multirow{5}{*}{$(H,Q)$} & 0 & \textbf{423} & \textbf{2120} & 0.129 & 1150 & \textbf{26918} & \textbf{0.048} & \textbf{1.715} & \textbf{2.714} \\
 & 25 & 1419 & 1636 & 0.034 & 1978 & 105609 & -0.072 & 1.639 & 12.000 \\
 & 50 & 1050 & 1737 & 0.01 & 1268 & 57566 & 0.221 & 1.724 & 10.714 \\
 & 75 & 775 & 1830 & 0.005 & 873 & 43221 & 0.385 & 1.797 & 11.143 \\
 & 100 & 1 & 3327 & \(3.7\times 10^{-4}\) & 1 & 8227 & \(-8.0\times 10^{-4}\) & 2.706 & 15.143 \\
\addlinespace
\multirow{5}{*}{$H$-SBM} & 0 & 461 & 2115 & \textbf{0.13} & \textbf{1156} & 27445 & 0.04 & 1.662 & \textbf{4.714} \\
 & 25 & 723 & 2484 & 0.021 & 563 & 30679 & 0.033 & 1.628 & 9.286 \\
 & 50 & 666 & 2565 & 0.008 & 330 & 31018 & 0.033 & 1.637 & 10.143 \\
 & 75 & 582 & 2663 & 0.004 & 181 & 30200 & 0.038 & 1.644 & 9.857 \\
 & 100 & 237 & 3060 & \(7.1\times 10^{-4}\) & 4 & 12458 & \(-7.1\times 10^{-4}\) & 1.618 & 13.143 \\
\addlinespace
Havel--Hakimi & -- & 978 & 1155 & 0.439 & 4877 & \textbf{26918} & 0.357 & \textbf{1.715} & 12.857 \\
\addlinespace
Configuration & -- & 171 & 3002 & 0.002 & 29 & \textbf{26918} & -0.016 & \textbf{1.715} & 10.286 \\
\bottomrule
\end{tabular}%
}
\end{table}

\begin{table}[h]
\centering
\caption{Original and reconstructed scalar metrics of PubMed dataset. Avg.Rank is the average rank over the seven scalar errors; in bold we report the values below the average metric values over all methods}
\label{tab:scalar-errors-pubmed}
\scriptsize
\resizebox{\textwidth}{!}{%
\begin{tabular}{lr|rrrrrrr|r}
\toprule
Method & $p$ & Components & LCC & Clustering & Triangles & Wedges & Assortativity & Power-law & Avg.Rank \\
\midrule
Original & -- & 1 & 19717 & 0.06 & 12520 & 699342 & -0.044 & 1.648 & -- \\
\midrule
\multirow{5}{*}{$(AH,H)$} & 0 & \textbf{1} & \textbf{19717} & \textbf{0.06} & \textbf{12520} & \textbf{699342} & \textbf{-0.044} & \textbf{1.648} & \textbf{2.500} \\
 & 25 & 25 & 19641 & 0.029 & 10366 & \textbf{699342} & \textbf{-0.044} & \textbf{1.648} & \textbf{4.286} \\
 & 50 & 35 & 19623 & 0.01 & 6764 & \textbf{699342} & \textbf{-0.044} & \textbf{1.648} & \textbf{5.286} \\
 & 75 & 63 & 19582 & 0.004 & 3435 & \textbf{699342} & \textbf{-0.044} & \textbf{1.648} & \textbf{6.571} \\
 & 100 & 508 & 18656 & 0.001 & 635 & \textbf{699342} & -0.005 & \textbf{1.648} & 9.143 \\
\addlinespace
\multirow{5}{*}{$(H,Q)$} & 0 & \textbf{1} & \textbf{19717} & \textbf{0.06} & \textbf{12520} & \textbf{699342} & \textbf{-0.044} & \textbf{1.648} & \textbf{2.500} \\
 & 25 & 15260 & 4267 & 0.017 & 54406 & 9995724 & -0.313 & 1.366 & 14.714 \\
 & 50 & 13932 & 5331 & 0.006 & 42723 & 3933211 & -0.189 & 1.412 & 14.143 \\
 & 75 & 11089 & 7701 & 0.002 & 21286 & 1598196 & 0.278 & 1.463 & 13.000 \\
 & 100 & \textbf{1} & \textbf{19717} & \(1.5\times 10^{-4}\) & 8 & 157422 & \(-4.9\times 10^{-4}\) & 5.095 & 11.143 \\
\addlinespace
\multirow{5}{*}{$H$-SBM} & 0 & 4 & 19714 & \textbf{0.06} & 12521 & 699349 & \textbf{-0.044} & \textbf{1.648} & \textbf{5.143} \\
 & 25 & 3615 & 16077 & 0.004 & 4501 & 734917 & -0.055 & 1.556 & 10.286 \\
 & 50 & 3286 & 16404 & 0.002 & 2175 & 736405 & -0.049 & 1.562 & 10.571 \\
 & 75 & 2742 & 16928 & 0.002 & 1361 & 729860 & -0.05 & 1.568 & 10.286 \\
 & 100 & 228 & 19485 & \(2.1\times 10^{-4}\) & 15 & 199121 & -0.001 & 1.479 & 11.714 \\
\addlinespace
Havel--Hakimi & -- & 6062 & 5684 & 0.394 & 115972 & \textbf{699342} & 0.773 & \textbf{1.648} & 12.714 \\
\addlinespace
Configuration & -- & 505 & 18663 & 0.001 & 644 & \textbf{699342} & -0.005 & \textbf{1.648} & 9.000 \\
\bottomrule
\end{tabular}%
}
\end{table}

\begin{figure}[h]
\centering
\includegraphics[width=\textwidth]{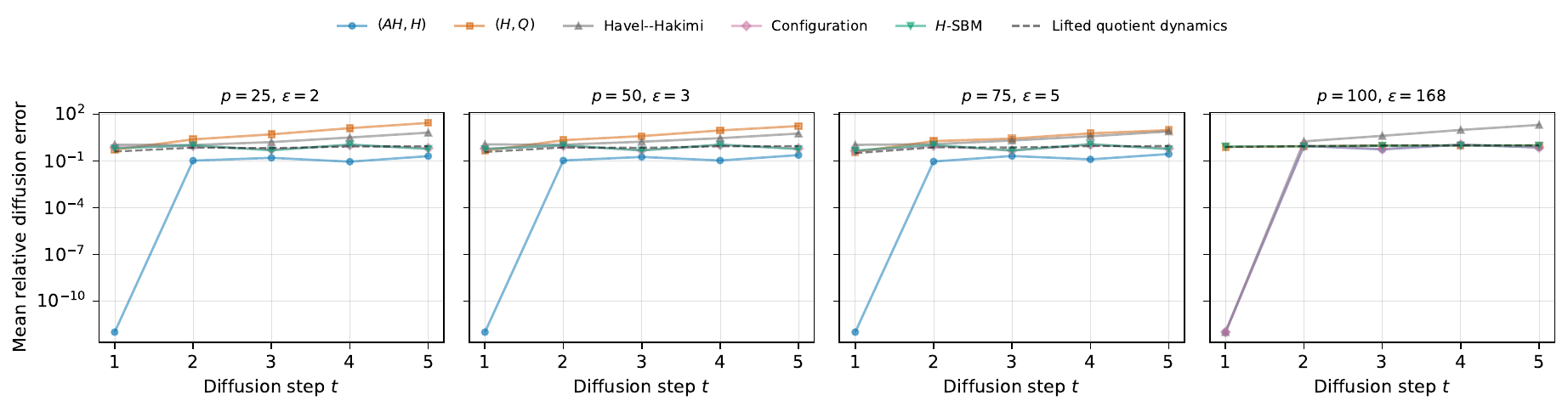}
\caption{Mean relative diffusion errors on Cora for $t=1,\ldots,5$.}
\label{fig:diffusion-errors-cora}
\end{figure}

\begin{figure}[h]
\centering
\includegraphics[width=\textwidth]{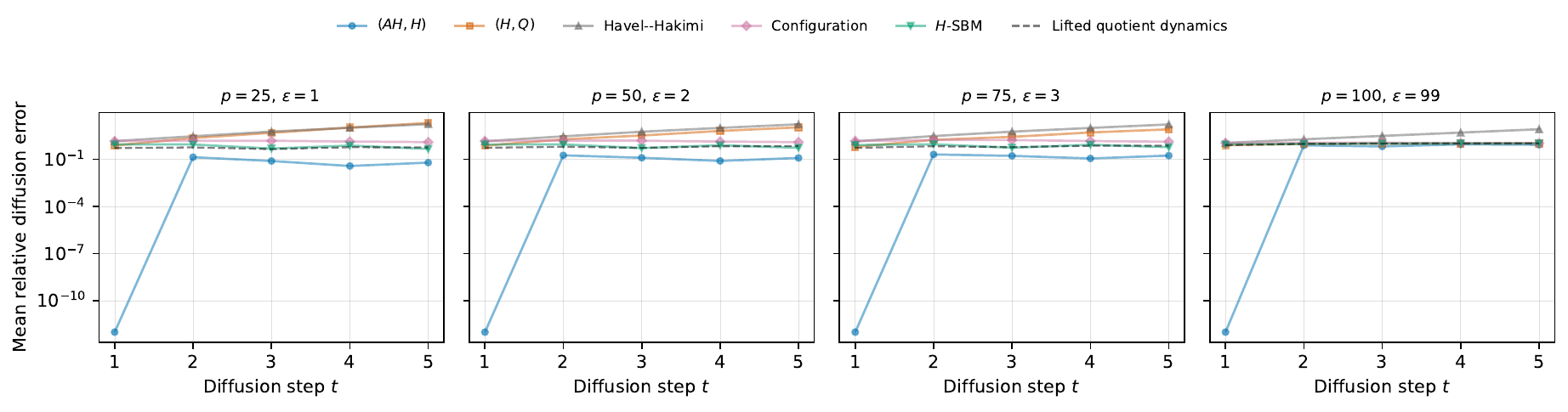}
\caption{Mean relative diffusion errors on Citeseer for $t=1,\ldots,5$.}
\label{fig:diffusion-errors-citeseer}
\end{figure}

\begin{figure}[h]
\centering
\includegraphics[width=\textwidth]{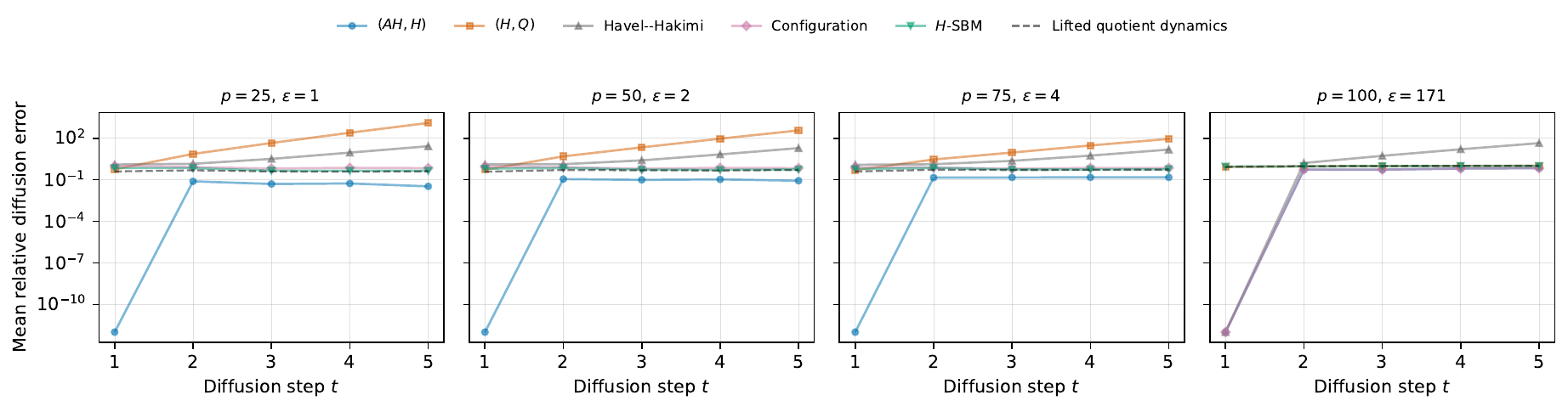}
\caption{Mean relative diffusion errors on PubMed for $t=1,\ldots,5$.}
\label{fig:diffusion-errors-pubmed}
\end{figure}

\begin{figure}[h]
\centering
\includegraphics[width=\textwidth]{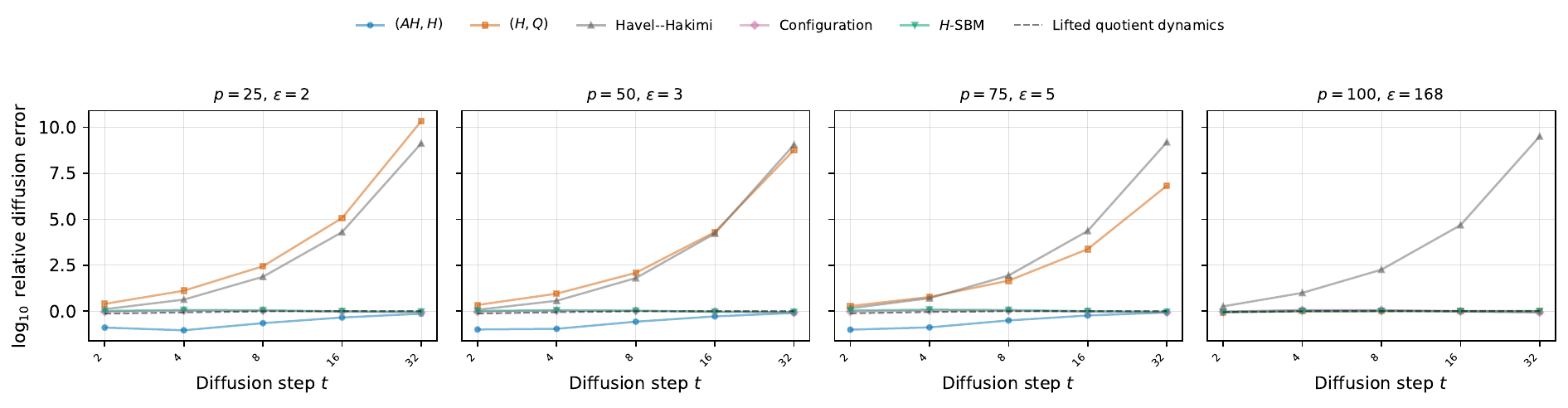}
\caption{Mean relative diffusion errors on Cora for $t\in\{2,4,8,16,32\}$. The x-axis is logarithmic with base 2 and the y-axis reports $\log_{10}$ relative error.}
\label{fig:diffusion-long-horizon-cora}
\end{figure}

\begin{figure}[h]
\centering
\includegraphics[width=\textwidth]{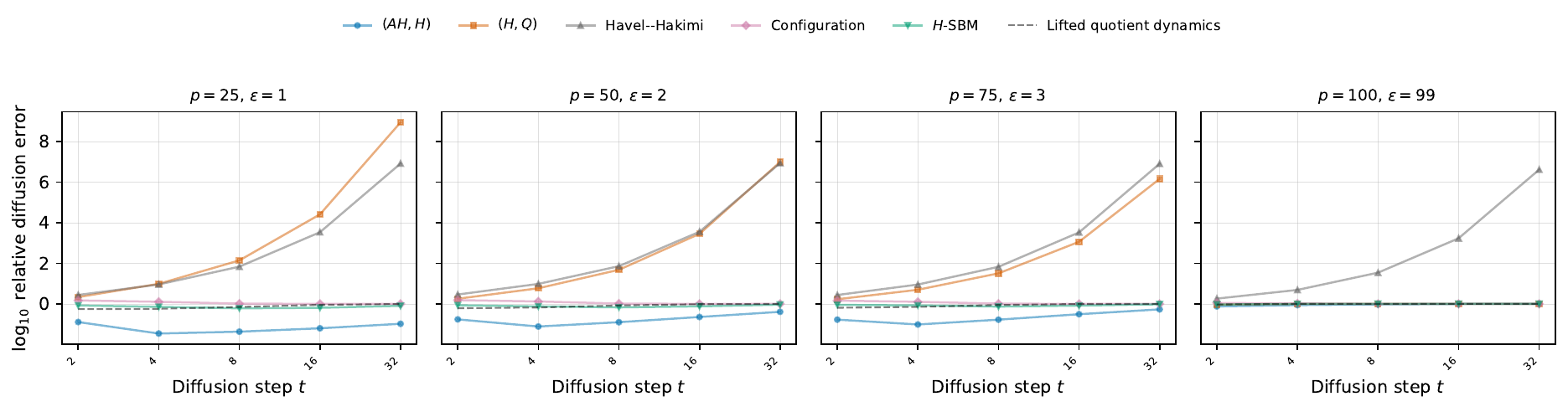}
\caption{Mean relative diffusion errors on Citeseer for $t\in\{2,4,8,16,32\}$. The x-axis is logarithmic with base 2 and the y-axis reports $\log_{10}$ relative error.}
\label{fig:diffusion-long-horizon-citeseer}
\end{figure}

\begin{figure}[h]
\centering
\includegraphics[width=\textwidth]{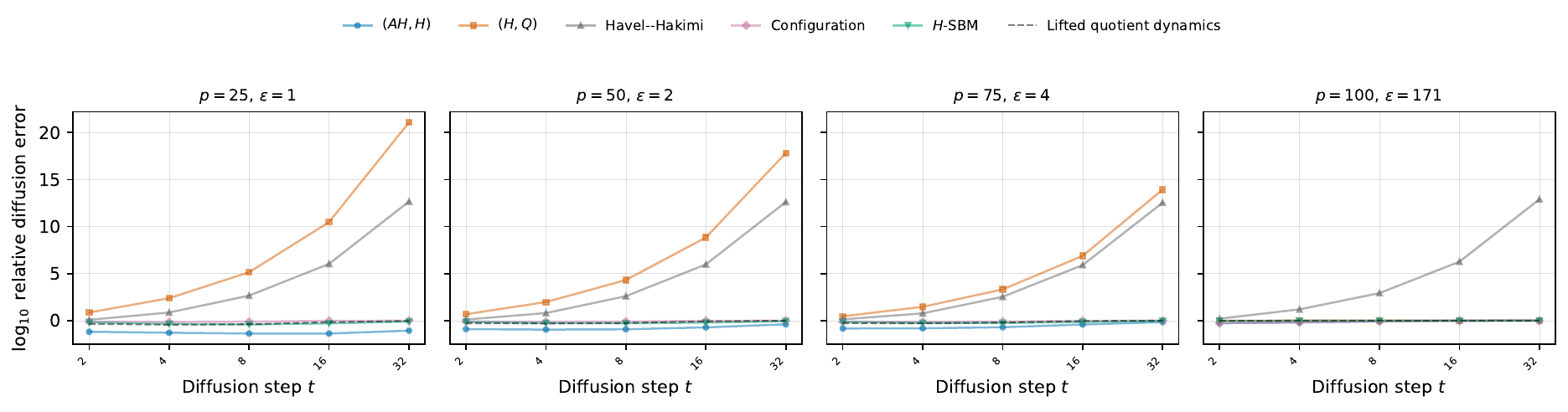}
\caption{Mean relative diffusion errors on PubMed for $t\in\{2,4,8,16,32\}$. The x-axis is logarithmic with base 2 and the y-axis reports $\log_{10}$ relative error.}
\label{fig:diffusion-long-horizon-pubmed}
\end{figure}

\begin{figure}[h]
\centering

\includegraphics[width=\linewidth]{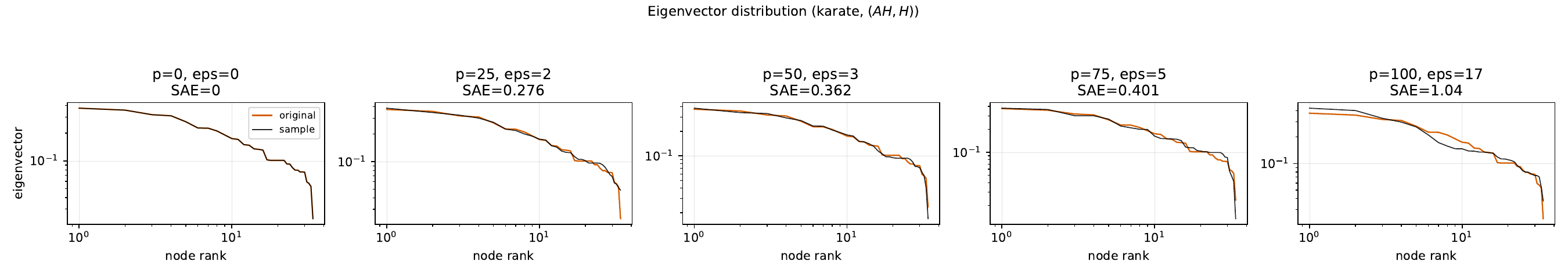}

\includegraphics[width=\linewidth]{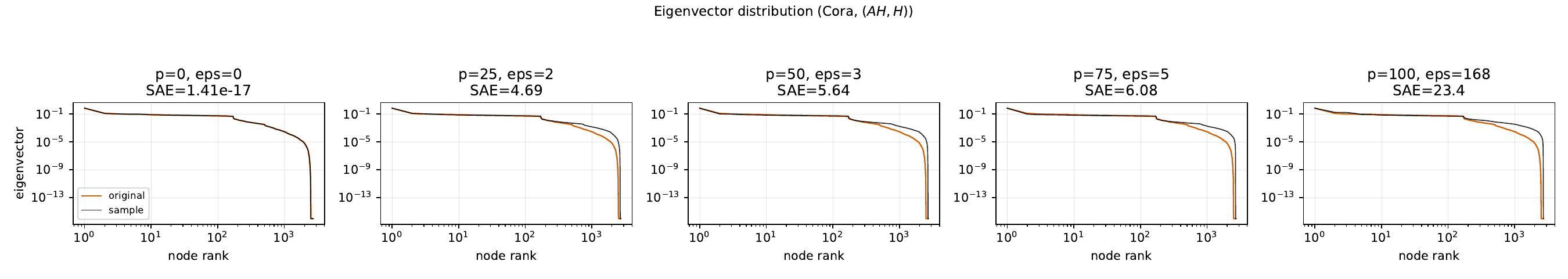}

\includegraphics[width=\linewidth]{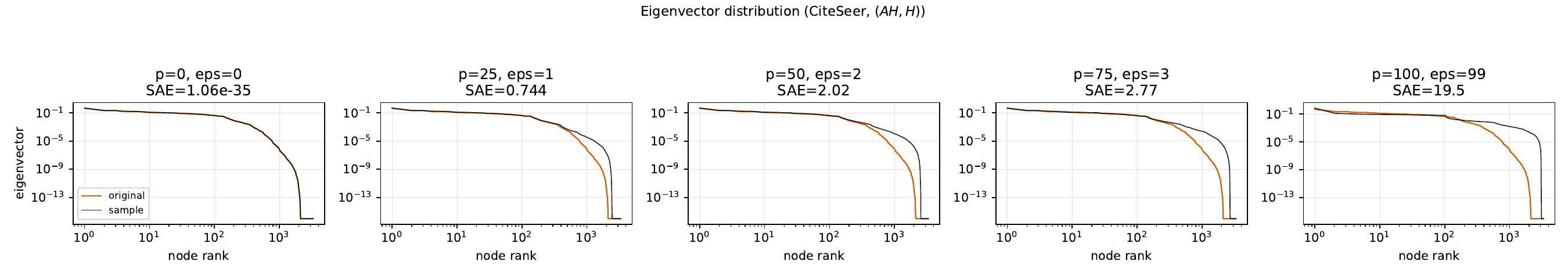}

\includegraphics[width=\linewidth]{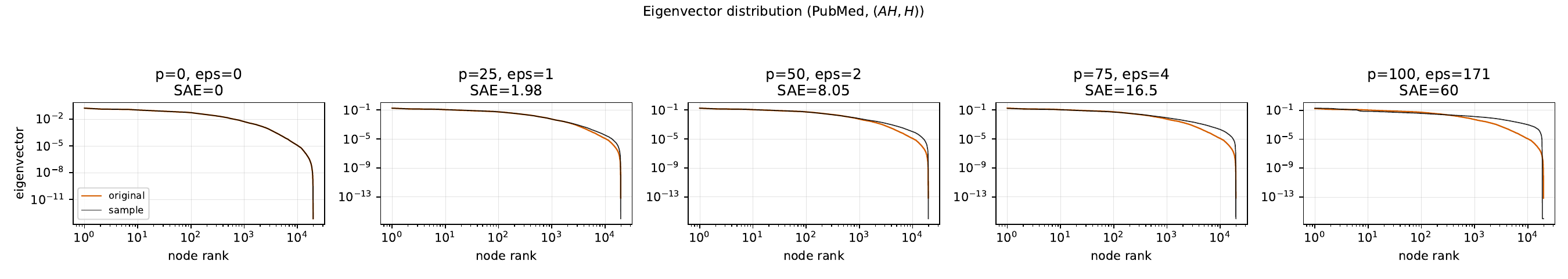}

\caption{Eigenvector centrality distributions of $\varepsilon$-EP reconstruction with release $(AH,H)$ and original graphs.}
\label{fig:eigenvector-distributions-ahh}
\end{figure}

\begin{figure}[h]
\centering

\includegraphics[width=\linewidth]{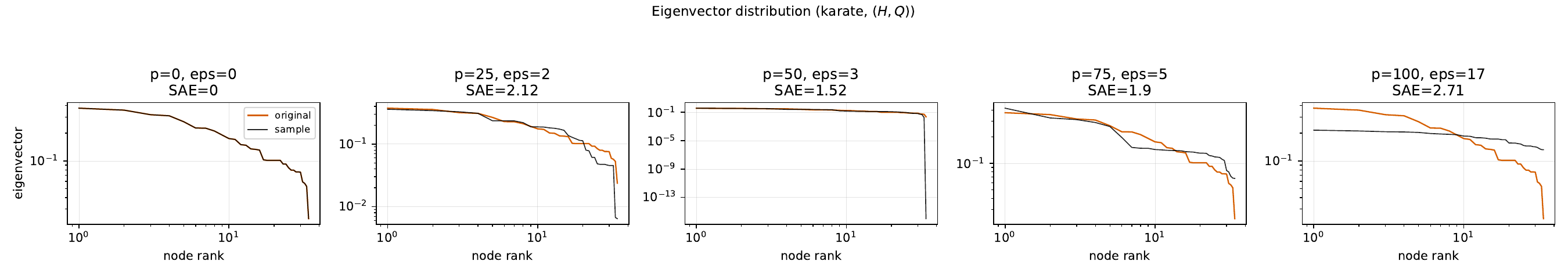}

\includegraphics[width=\linewidth]{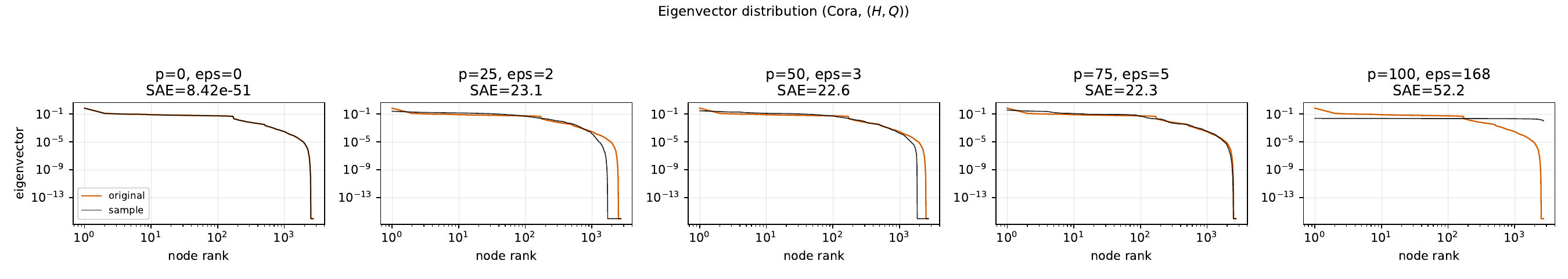}

\includegraphics[width=\linewidth]{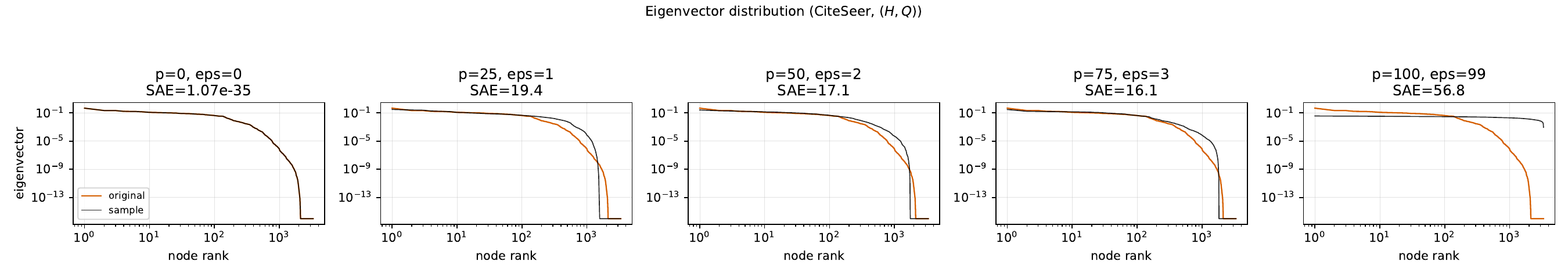}

\includegraphics[width=\linewidth]{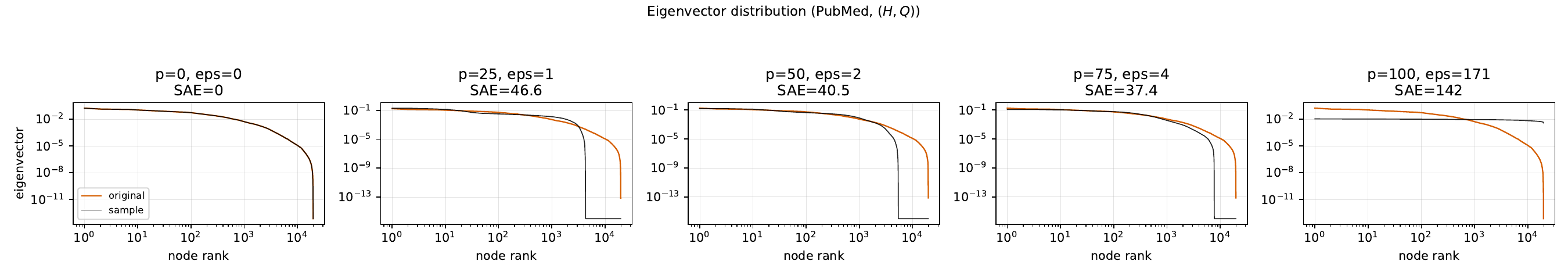}

\caption{Eigenvector centrality distributions of $\varepsilon$-EP reconstruction with release $(H,Q)$ and original graphs.}
\label{fig:eigenvector-distributions-hq}
\end{figure}

\begin{figure}[h]
\centering

\includegraphics[width=\linewidth]{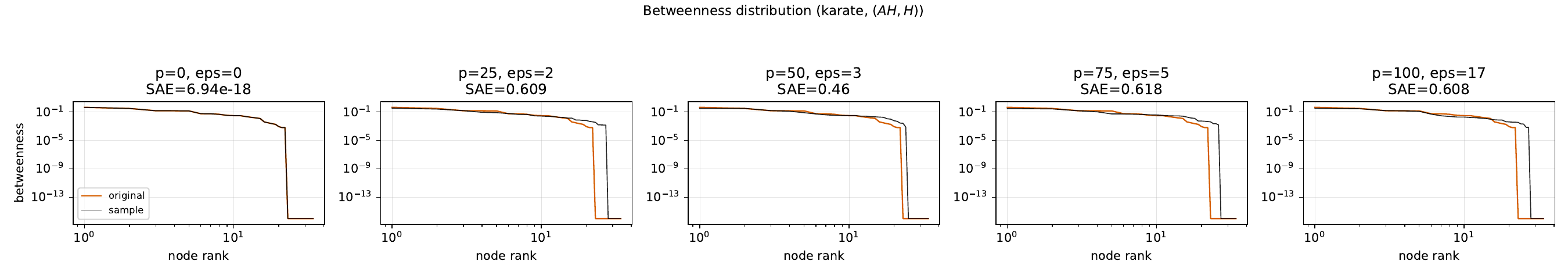}

\includegraphics[width=\linewidth]{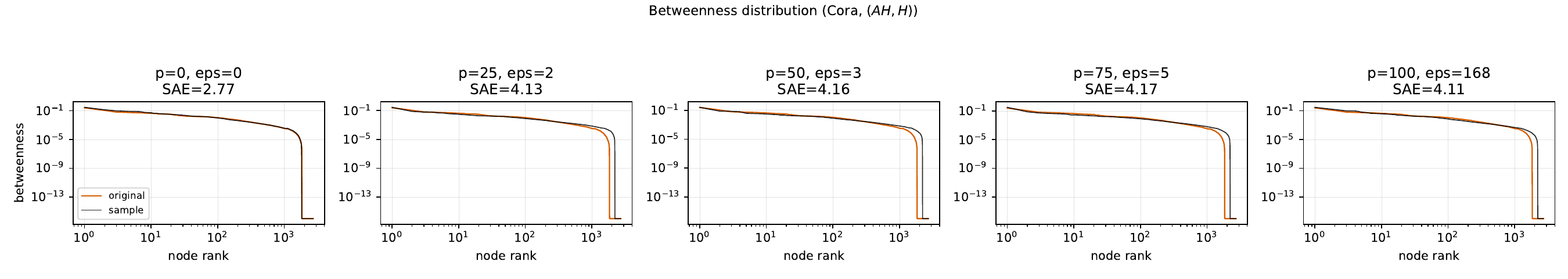}

\includegraphics[width=\linewidth]{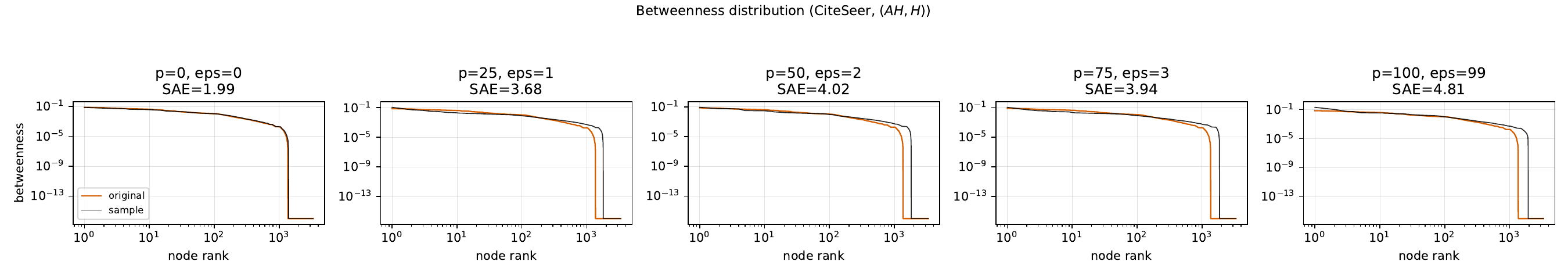}

\includegraphics[width=\linewidth]{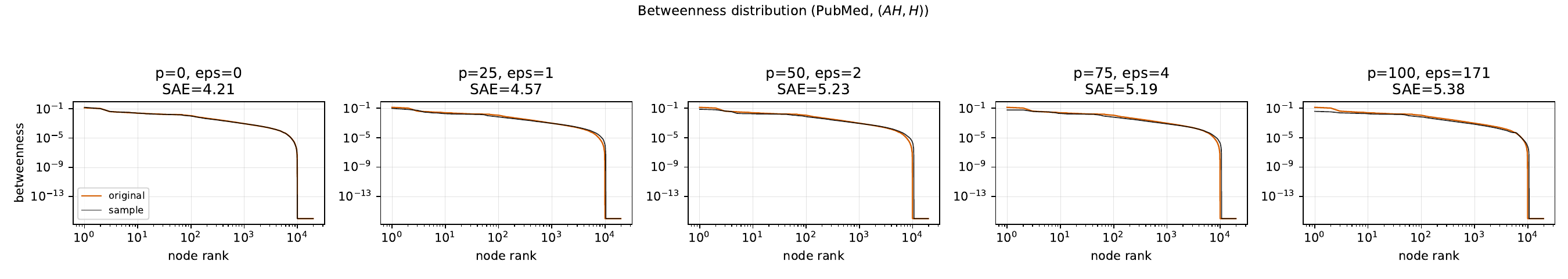}

\caption{Betweenness centrality distributions of $\varepsilon$-EP reconstruction with release $(AH,H)$ and original graphs.}
\label{fig:betweenness-distributions-ahh}
\end{figure}

\begin{figure}[h]
\centering

\includegraphics[width=\linewidth]{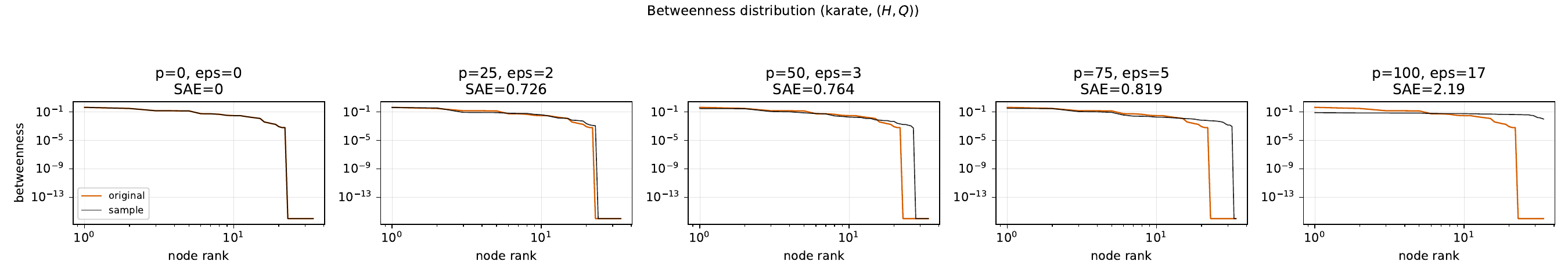}

\includegraphics[width=\linewidth]{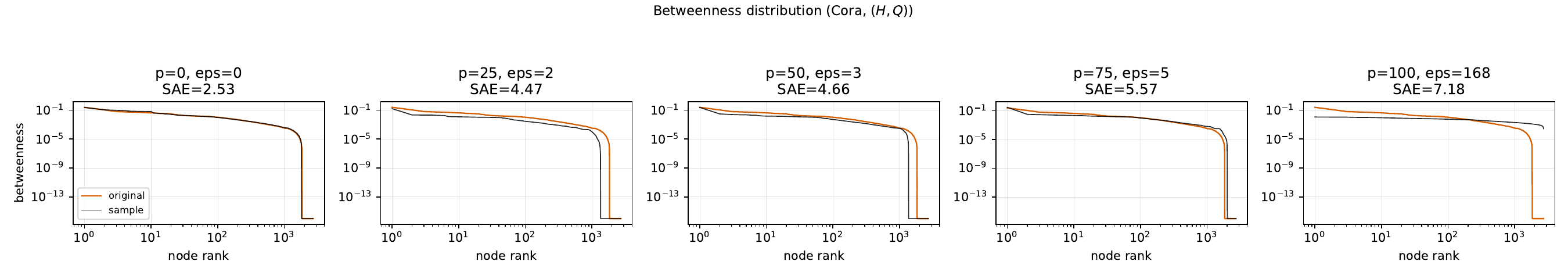}

\includegraphics[width=\linewidth]{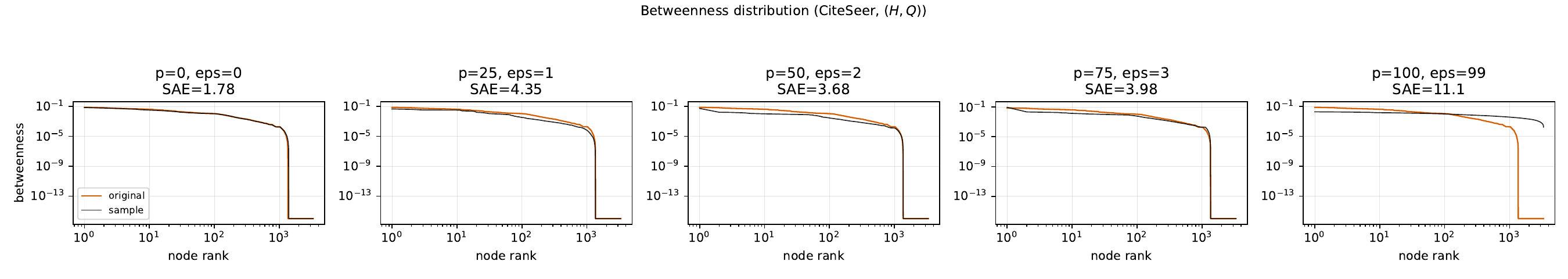}

\includegraphics[width=\linewidth]{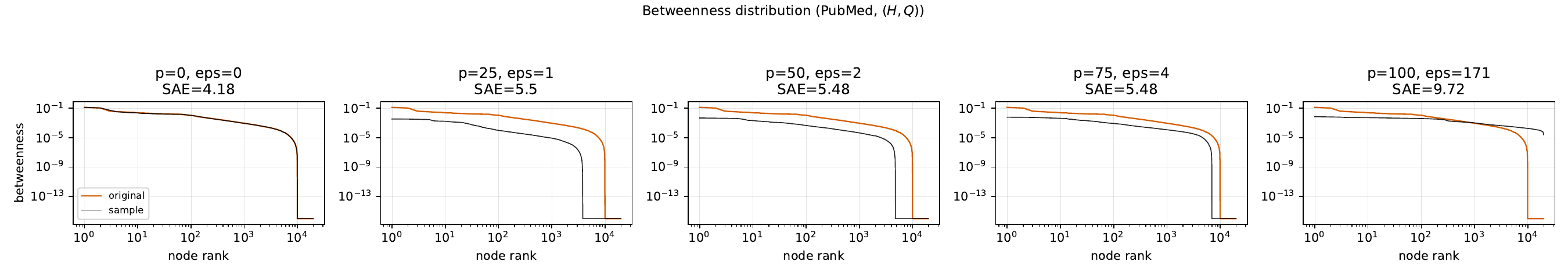}

\caption{Betweenness centrality distributions of $\varepsilon$-EP reconstruction with release $(H,Q)$ and original graphs.}
\label{fig:betweenness-distributions-hq}
\end{figure}

\begin{figure}[h]
\centering

\includegraphics[width=\linewidth]{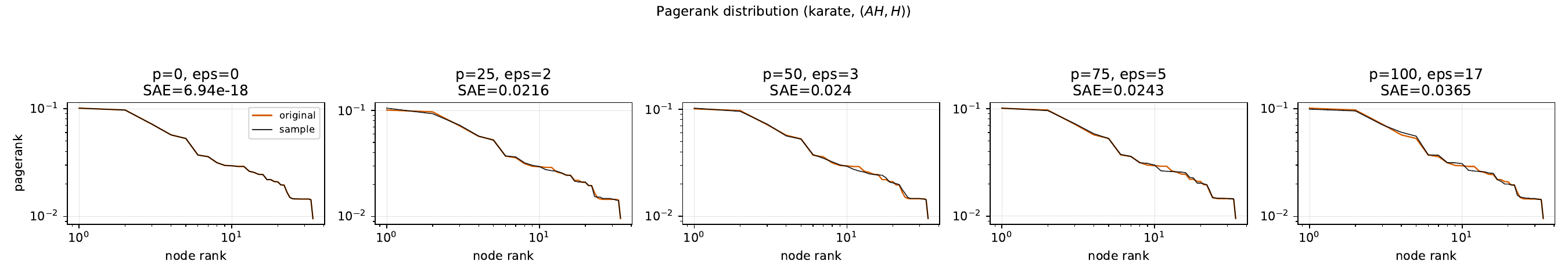}

\includegraphics[width=\linewidth]{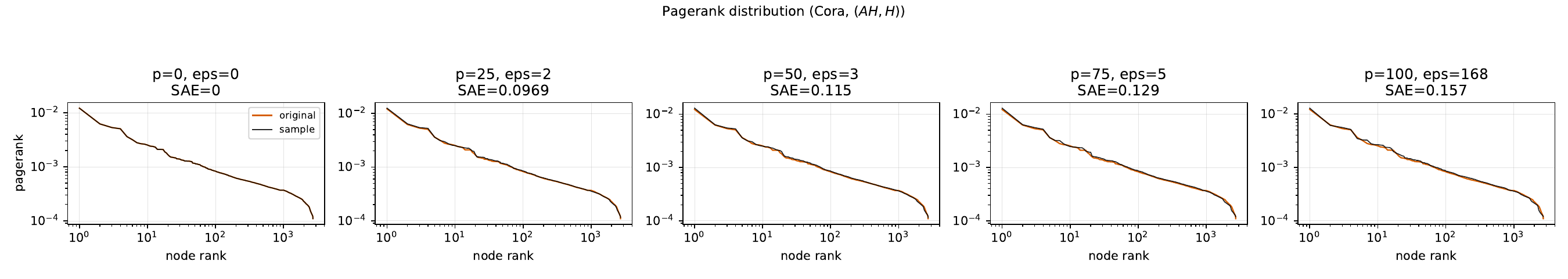}

\includegraphics[width=\linewidth]{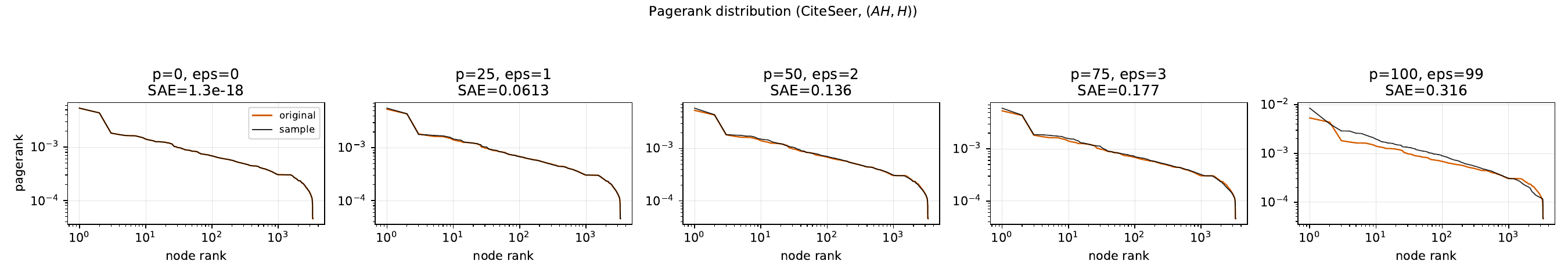}

\includegraphics[width=\linewidth]
{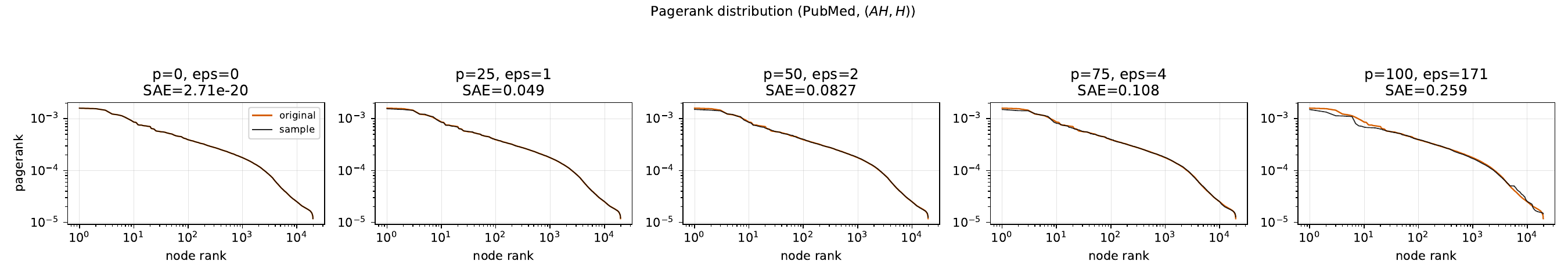}

\caption{PageRank centrality distributions of $\varepsilon$-EP reconstruction with release $(AH,H)$ and original graphs.}
\label{fig:pagerank-distributions-ahh}
\end{figure}

\begin{figure}[h]
\centering

\includegraphics[width=\linewidth]{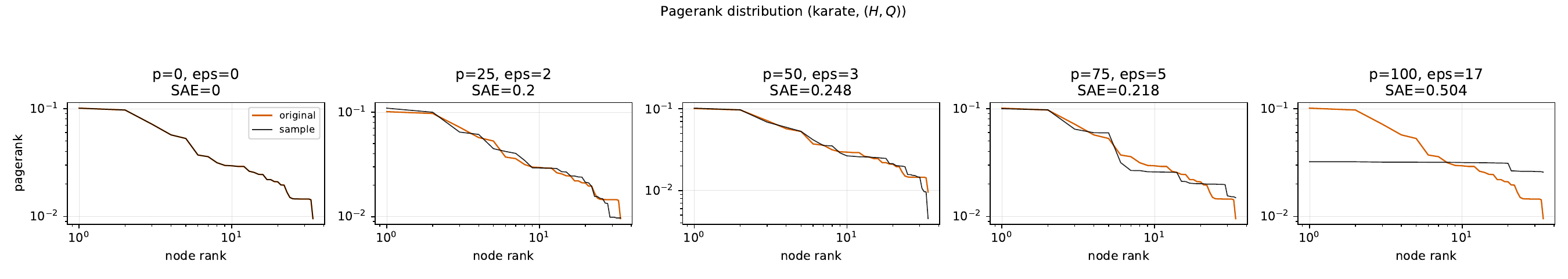}

\includegraphics[width=\linewidth]{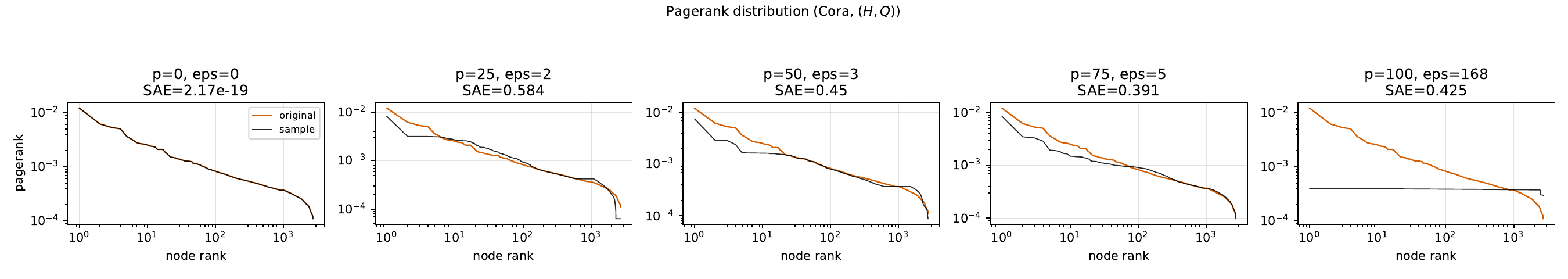}

\includegraphics[width=\linewidth]{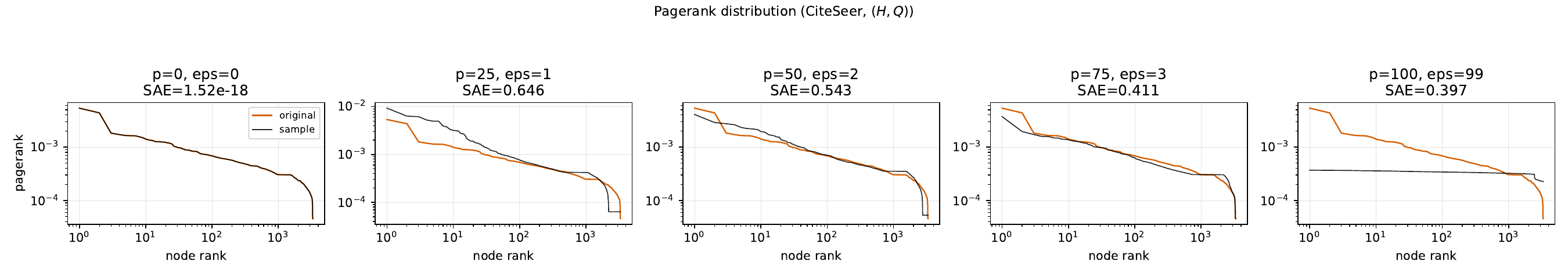}

\includegraphics[width=\linewidth]{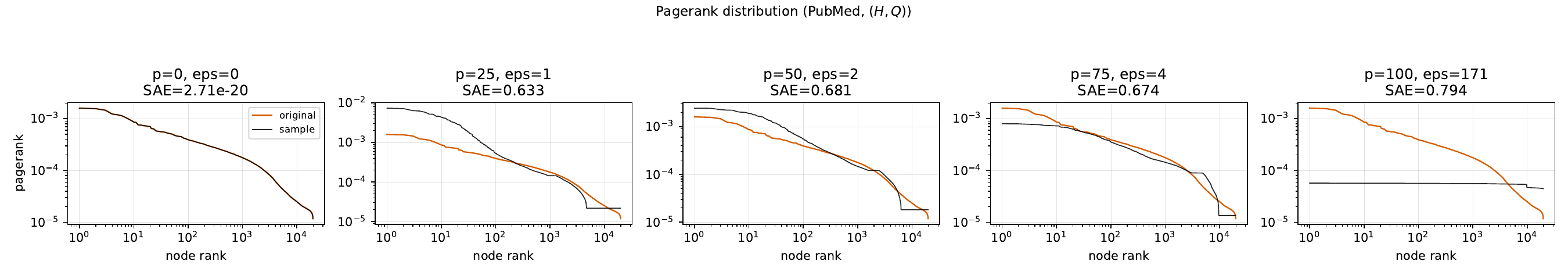}

\caption{PageRank centrality distributions of $\varepsilon$-EP reconstruction with release $(H,Q)$ and original graphs.}
\label{fig:pagerank-distributions-hq}
\end{figure}

\begin{figure}[h]
\centering

\includegraphics[width=\linewidth]{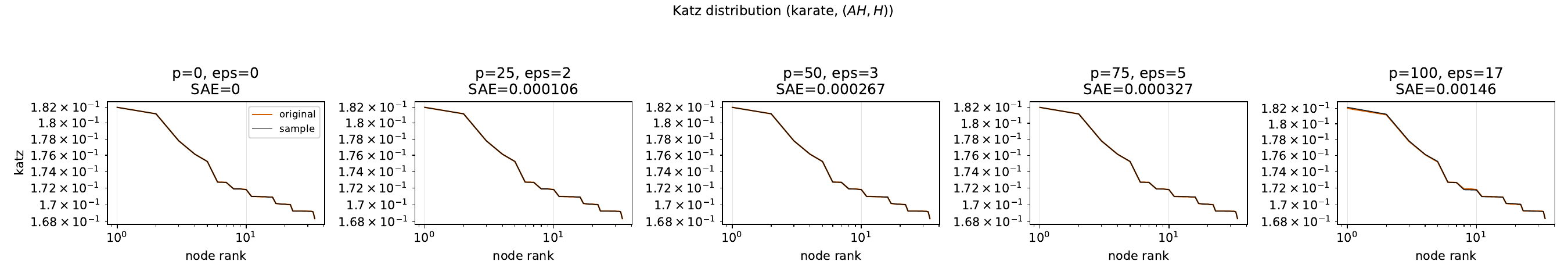}

\includegraphics[width=\linewidth]{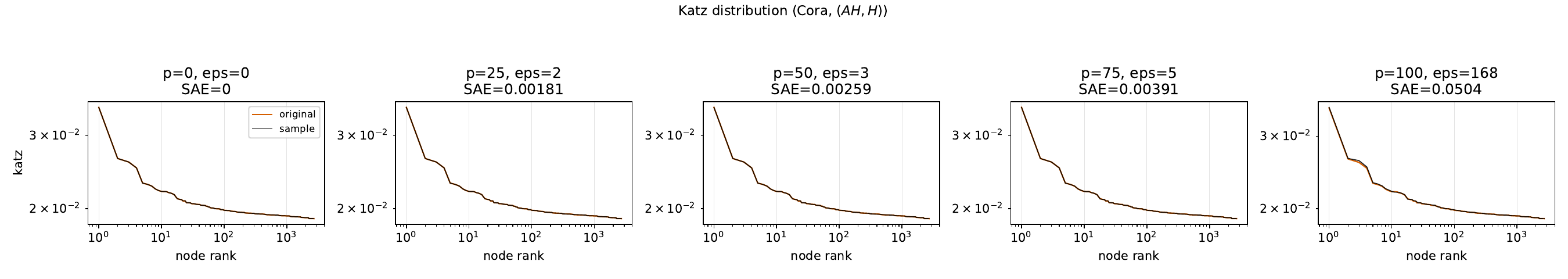}

\includegraphics[width=\linewidth]{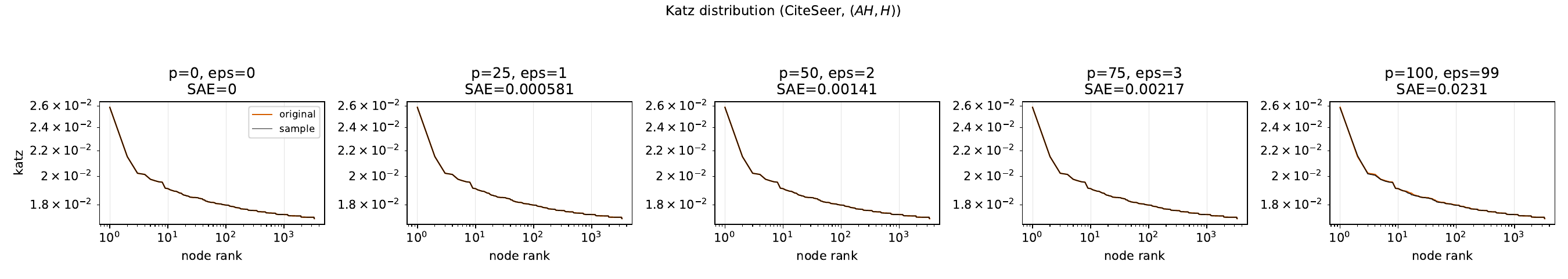}

\includegraphics[width=\linewidth]{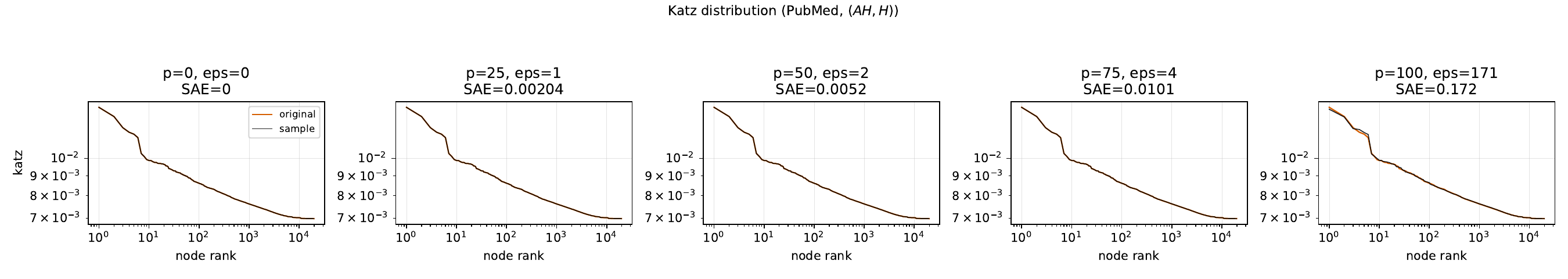}

\caption{Katz centrality distributions of $\varepsilon$-EP reconstruction with release $(AH,H)$ and original graphs.}
\label{fig:katz-distributions-ahh}
\end{figure}

\begin{figure}[h]
\centering
\includegraphics[width=\linewidth]{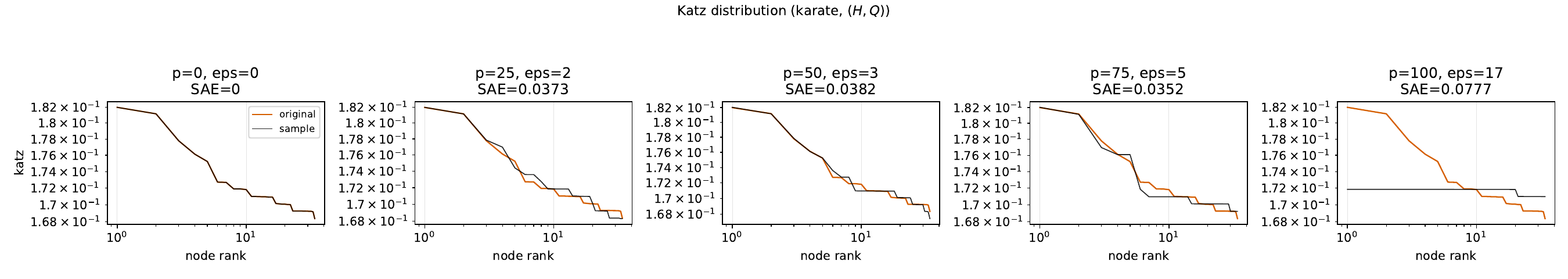}

\includegraphics[width=\linewidth]{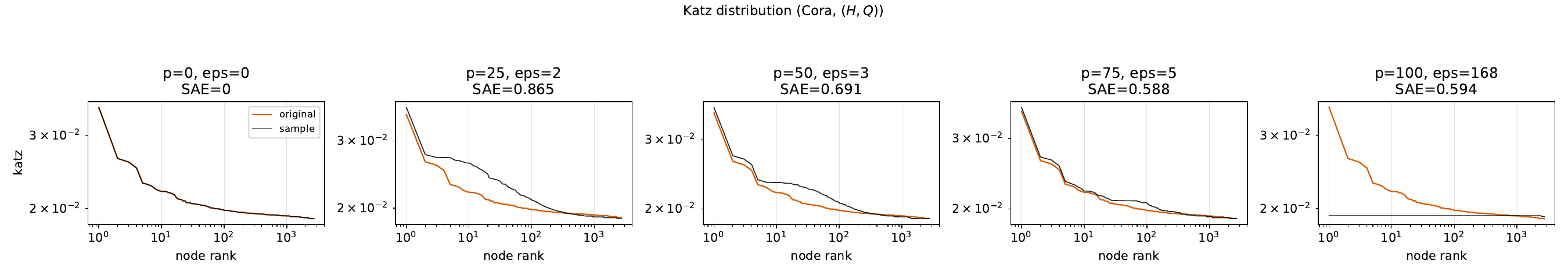}

\includegraphics[width=\linewidth]{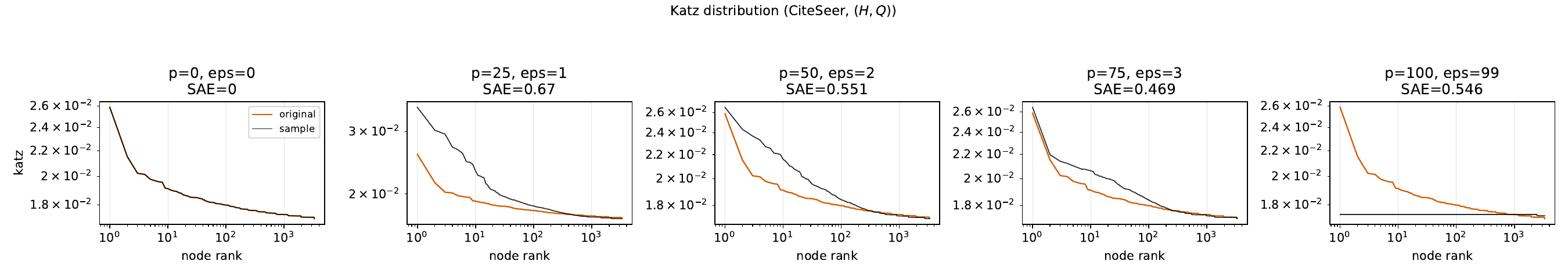}

\includegraphics[width=\linewidth]{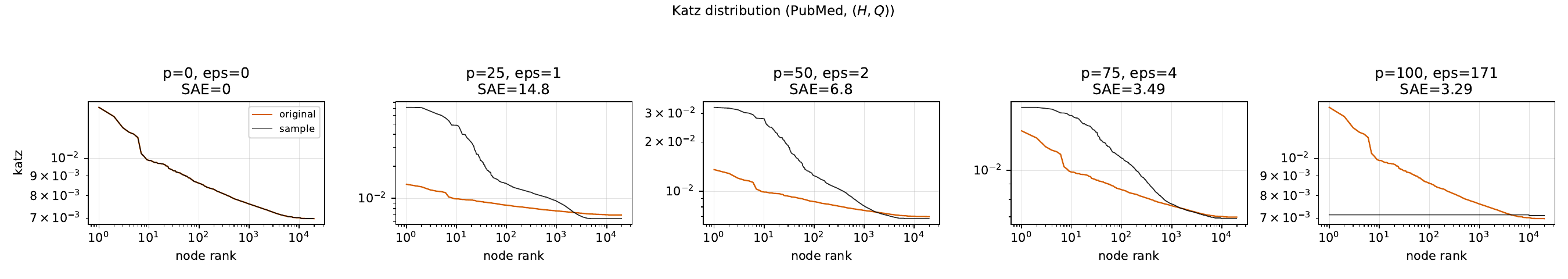}

\caption{Katz centrality distributions of $\varepsilon$-EP reconstruction with release $(H,Q)$ and original graphs.}
\label{fig:katz-distributions-hq}
\end{figure}

\clearpage

\bibliographystyle{cas-model2-names}
\bibliography{ep_realizability_final}